\apptocmd{\sloppy}{\hbadness 10000\relax}{}{} 
\pgfplotsset{compat=newest, compat/show suggested version=false} 
\tikzstyle{gbox}=[box, draw=black, shape=rectangle, fill={zx_green}, tikzit fill={rgb,255: red,181; green,215; blue,181}]
\tikzstyle{hbox}=[box, draw=black, shape=rectangle, fill=yellow, minimum size=.55em]
\tikzstyle{box}=[draw=black, shape=rectangle, fill={white}, minimum size=.95em, inner sep=0.15em, scale=0.85, font={\footnotesize}]
\tikzstyle{gn}=[draw=black, shape=circle, fill={zx_green}, draw=black, inner sep=0.7mm, minimum width=0pt, minimum height=0pt, tikzit fill={rgb,255: red,181; green,215; blue,181}]
\tikzstyle{rn}=[gn, fill={zx_red}, draw=black, tikzit fill={rgb,255: red,215; green,96; blue,96}]
\tikzstyle{gn_phase}=[shape=rectangle, fill={zx_green}, draw=black, minimum size=1.2em, rounded corners=0.5em, inner sep=0.2em, outer sep=-0.2em, scale=0.8, font={\footnotesize}, tikzit shape=circle, tikzit fill={rgb,255: red,181; green,215; blue,181}]
\tikzstyle{rn_phase}=[gn_phase, fill={zx_red}, draw=black, tikzit fill={rgb,255: red,215; green,96; blue,96}]
\tikzstyle{ltriang}=[rtriang, shape=isosceles triangle, fill=yellow, draw=black, shape border rotate=180]
\tikzstyle{rtriang}=[shape=isosceles triangle, fill=yellow, draw=black, isosceles triangle stretches=true, inner sep=0.8pt, minimum width=0.25cm, minimum height=2mm]
\tikzstyle{dtriang}=[rtriang, shape=isosceles triangle, fill=yellow, draw=black, shape border rotate=-90]
\tikzstyle{utriang}=[rtriang, shape=isosceles triangle, fill=yellow, draw=black, shape border rotate=90]
\tikzstyle{lmat}=[shape=signal, signal to=west, signal from=east, fill={zx_grey}, draw=black, minimum height=6pt, inner sep=1pt, font={\scriptsize  \boldmath}, tikzit fill=gray, tikzit category=GLA]
\tikzstyle{rmat}=[shape=signal, signal to=east, signal from=west, fill={zx_grey}, draw=black, minimum height=6pt, inner sep=1pt, font={\scriptsize  \boldmath}, tikzit fill=gray, tikzit category=GLA]
\tikzstyle{dmat}=[shape=signal, signal to=west, signal from=east, fill={zx_grey}, draw=black, minimum height=6pt, inner sep=1pt, font={\scriptsize  \boldmath}, tikzit fill=gray, tikzit category=GLA, rotate=90]
\tikzstyle{umat}=[shape=signal, signal to=east, signal from=west, fill={zx_grey}, draw=black, minimum height=6pt, inner sep=1pt, font={\scriptsize  \boldmath}, tikzit fill=gray, tikzit category=GLA, rotate=90]
\tikzstyle{lw}=[shape=isosceles triangle, isosceles triangle stretches=true, fill=black, draw=black, minimum width=0.4cm, minimum height=3mm, inner sep=1pt, shape border rotate=180]
\tikzstyle{rw}=[shape=isosceles triangle, isosceles triangle stretches=true, fill=black, draw=black, minimum width=0.4cm, minimum height=3mm, inner sep=1pt]
\tikzstyle{dw}=[shape=isosceles triangle, isosceles triangle stretches=true, fill=black, draw=black, minimum width=0.4cm, minimum height=3mm, inner sep=1pt, shape border rotate=-90]
\tikzstyle{uw}=[shape=isosceles triangle, isosceles triangle stretches=true, fill=black, draw=black, minimum width=0.4cm, minimum height=3mm, inner sep=1pt, shape border rotate=90]
\tikzstyle{lsplit}=[shape=isosceles triangle, isosceles triangle stretches=true, fill=white, draw=black, minimum width=0.4cm, minimum height=3mm, inner sep=1pt, shape border rotate=180]
\tikzstyle{rmerge}=[shape=isosceles triangle, isosceles triangle stretches=true, fill=white, draw=black, minimum width=0.4cm, minimum height=3mm, inner sep=1pt]
\tikzstyle{vsplit}=[shape=isosceles triangle, isosceles triangle stretches=true, fill=white, draw=black, minimum width=0.4cm, minimum height=3mm, inner sep=1pt, shape border rotate=90]
\tikzstyle{phase}=[draw=black, shape=rectangle, fill={white}, minimum size=.95em, inner sep=0.15em, scale=0.85, font={\footnotesize}]
\tikzstyle{black node}=[draw=black, shape=circle, scale= 0.3, fill={black}, font={\footnotesize}]
\tikzstyle{d_split}=[shape=trapezium, fill=white, draw=black, minimum width=11pt, inner sep=1pt]
\tikzstyle{d_merge}=[shape=trapezium, fill=white, draw=black, minimum width=11pt, inner sep=1pt, rotate=180]
\tikzstyle{bluedot}=[draw, shape=circle, fill=zx_green, scale=0.35, tikzit fill=blue]
\tikzstyle{wire label}=[font=\scriptsize, auto]
\tikzstyle{braceedge}=[-, decorate, decoration={brace, amplitude=2mm, raise=-1mm}]
\tikzstyle{blue-edge}=[-, very thick, draw=blue]
\tikzstyle{hadamard edge}=[-, dashed, draw=blue]
\definecolor{zx_grey}{RGB}{211,211,211}
\definecolor{zx_red}{RGB}{232,165,165}
\definecolor{zx_green}{RGB}{216,248,216}
\newcommand{\N}{\mathbb{N}}
\renewcommand{\C}{\mathbb{C}}}
\newcommand{\C}{\mathbb{C}}}
\newcommand{\minu}{\texttt{-}}
\newcommand{\interp}[1]{\left\llbracket#1\right\rrbracket}
\newcommand{\bR}{\begin{color}{red}}
\newcommand{\bB}{\begin{color}{blue}}
\newcommand{\bM}{\begin{color}{magenta}}
\newcommand{\bC}{\begin{color}{cyan}}
\newcommand{\bW}{\begin{color}{white}}
\newcommand{\bBl}{\begin{color}{black}}
\newcommand{\bG}{\begin{color}{green}}
\newcommand{\bY}{\begin{color}{yellow}}
\newcommand{\e}{\end{color}}
\newcommand{\tikzrefsize}[1]{\scriptsize{#1}}
\newcommand{\axref}[1]{\tikzrefsize{\eqref{rule:#1}}}
\newcommand{\lemref}[1]{\tikzrefsize{\eqref{#1}}}
\newcommand{\Mod}[1]{\ (\mathrm{mod}\ #1)}
\newtheorem{Th}{Theorem}[section]
\newtheorem{theorem}[Th]{Theorem}
\newtheorem{proposition}[Th]{Proposition} 
\newtheorem{lemma}[Th]{Lemma}
\newtheorem{definition}[Th]{Definition} 
\newtheorem{remark}[Th]{Remark}
\newcommand{\zxw}[1]{\cite[#1]{poorCompletenessArbitraryFinite2023}}
\newcommand{\inftyfy}[1]{#1$\null_{\infty}$}
\title{Light-Matter Interaction in the ZXW Calculus}
\author{\and
Giovanni de Felice$\null^{1}$ \and
Razin A. Shaikh$\null^{1,2}$ \and
Boldizsár Poór$\null^{1}$ \and \and
Lia Yeh$\null^{1,2}$ \and
Quanlong Wang$\null^{1}$ \and
Bob Coecke$\null^{1}$ \and
\institute{$\null^{1}$Quantinuum, 17 Beaumont Street, Oxford, OX1 2NA, United Kingdom}
\institute{$\null^{2}$University of Oxford, Oxford, United Kingdom}
}
\begin{document}

\maketitle

\begin{abstract}
    In this paper we develop a graphical calculus to rewrite photonic circuits
    involving light-matter interactions and non-linear optical effects.
    We introduce the infinite ZW calculus, a graphical language for linear operators
    on the bosonic Fock space which captures both linear and non-linear photonic
    circuits.
    This calculus is obtained by combining the QPath calculus,
    a diagrammatic language for linear optics,
    and the recently developed qudit ZXW calculus,
    a complete axiomatisation of linear maps between qudits.
    It comes with a ``lifting'' theorem allowing to prove equalities between infinite
    operators by rewriting in the ZXW calculus.
    We give a method for representing bosonic and fermionic Hamiltonians in the infinite ZW calculus.
    This allows us to derive their exponentials by diagrammatic reasoning.
    Examples include phase shifts and beam splitters, as well as non-linear Kerr media
    and Jaynes-Cummings light-matter interaction.
\end{abstract}

\section{Introduction}\label{sec:introduction}

Graphical languages are a powerful tool for understanding, verifying and
developing software for quantum computing.
The \emph{ZX calculus}~\cite{coeckeBasicZXcalculusStudents2023, coeckePicturingQuantumProcesses2017, vandeweteringZXcalculusWorkingQuantum2020} is a graphical
language for qubit quantum computing which is currently used to solve a range of different
quantum computing tasks, including compilation~\cite{sivarajahTKetRetargetableCompiler2020}, optimization~\cite{duncanGraphtheoreticSimplificationQuantum2020, kissingerReducingNumberNonClifford2020},
machine learning~\cite{toumiDiagrammaticDifferentiationQuantum2021, wangDifferentiatingIntegratingZX2022},
measurement-based quantum computing~\cite{duncanRewritingMeasurementBasedQuantum2010, backensThereBackAgain2021},
error-correction~\cite{kissingerPhasefreeZXDiagrams2022},
and also education~\cite{coeckeQuantumPictures2022}.
It was introduced in 2007 by Coecke and Duncan~\cite{coeckeInteractingQuantumObservables2007, coeckeInteractingQuantumObservables2008, coeckeInteractingQuantumObservables2011}
to model the interaction of complementary observables.
In 2010, Coecke and Kissinger proposed to introduce the
W state~\cite{durThreeQubitsCan2000} in the calculus~\cite{coeckeCompositionalStructureMultipartite2010} and showed that
rational arithmetic operations had simple graphical representations~\cite{coeckeGHZWcalculusContains2011}.
Building on their work, Hadzihasanovic gave the first complete axiomatisation for
qubit quantum computing~\cite{hadzihasanovicDiagrammaticAxiomatisationQubit2015},
known as the \emph{ZW calculus}.
The completeness of an equational theory is important as it shows that the language is rich enough
to prove all equalities of the underlying linear maps; therefore, there are \enquote{no missing rules}.
The same techniques were subsequently used to  prove completeness of the
ZX calculus~\cite{ngUniversalCompletionZXcalculus2017, hadzihasanovicTwoCompleteAxiomatisations2018, kissingerClassicalSimulationQuantum2022, jeandelDiagrammaticReasoningClifford2018}.
The rewriting power and the ability of representing sums using W nodes motivated
the development of the \emph{ZXW calculus},
allowing the first diagrammatic treatment of Hamiltonians and exponentiation~\cite{shaikhHowSumExponentiate2022}, as well as integration and
differentiation~\cite{wangDifferentiatingIntegratingZX2022}.
A natural generalisation of ZXW to higher dimensions resulted in
the first complete axiomatisation for qudit quantum computing~\cite{poorCompletenessArbitraryFinite2023}.

Graphical languages have only recently been applied to photonic quantum computing.
The ZX calculus is now used to reason about compilation of linear optical circuits, and particularly its MBQC aspects~\cite{zilkCompilerUniversalPhotonic2022, defeliceQuantumLinearOptics2022}.
It is also used to represent error correction codes for fault-tolerant quantum computing~\cite{bombinLogicalBlocksFaultTolerant2023}.
These approaches all rely on an encoding of qubits or qubit lattices on a state
of multiple photons.
Graphical languages for reasoning about the underlying physics of interacting photons
have so far been restricted to the study of \emph{linear} optical processes.
The language of interferometers built from phases and beam splitters is well understood for single
photons~\cite{reckExperimentalRealizationAny1994, clementsOptimalDesignUniversal2016, clementLOvCalculusGraphicalLanguage2022}.
Different methods can be used to compute the amplitudes of linear optical circuits
with multiple photons~\cite{aaronsonComputationalComplexityLinear2011}.
The \emph{QPath calculus}~\cite{defeliceQuantumLinearOptics2022} is a recent graphical
approach to this problem which decomposes circuits into simpler primitive operations.
However, it has a limited rewrite system where diagrams involving multiple photons
are split into sums of terms.
Moreover, \emph{non-linear} optical phenomena appear throughout photonic quantum science.
Non-linearities are used to construct photon sources by parametric down
conversion~\cite{guoParametricDownconversionPhotonpair2017} or by emission from
a 2-level atom in a cavity~\cite{trivediGenerationNonClassicalLight2020, weinPhotonnumberEntanglementGenerated2022}.
They include mixed boson-fermion systems such as the ones studied in quantum chemistry~\cite{fitzpatrickEvaluatingLowdepthQuantum2021}.
They also include non-linear Kerr media
which allow the optical specification of universal quantum gates~\cite{azumaQuantumComputationKerrnonlinear2007},
with recently proposed graphene-based implementations~\cite{calajoNonlinearQuantumLogic2023}.
These non-linear effects rely on different forms of light-matter interaction.
Hence, we need a graphical calculus that could represent interaction between bosons and fermions.

The semantics of a photonic graphical calculus is based on infinite dimensional Fock space.
However, graphical calculi with infinite dimensional semantics do not have a strong reasoning system.
Even simple equalities, such as the snake equation, are not well-defined.
To alleviate this problem, we truncate the infinite dimensional vector space to finite dimensions for which we have a complete graphical calculus, i.e.\@ the ZXW calculus.
Then we prove a lifting theorem to show that, under mild conditions, the results derived in this truncated graphical calculus also hold in infinite dimensional vector space.
This idea is similar to that of Gogioso and Genovese~\cite{gogiosoInfinitedimensionalCategoricalQuantum2016, gogiosoQuantumFieldTheory2018, gogiosoQuantumFieldTheory2019}
where the transfer theorem from non-standard analysis is used to prove results about
infinite dimensional Hilbert spaces.
However, by restricting our attention to states with finite support and maps preserving this property, our lifting theorem admits a simpler proof which does not require non-standard analysis.

Our approach of leveraging a finite dimensional calculus has a key advantage over constraining to solely infinite dimensional formalisms.
We can now reason about not only bosons, but also other finite dimensional systems such as fermions, altogether in the same framework.
Hence, we achieve a unified calculus powerful enough to reason about light-matter interactions.

This paper provides a framework for representing and rewriting quantum optics in the ZXW calculus.
We start by introducing the QPath calculus and the ZXW calculus (\cref{sec:preliminaries}).
The first is a graphical language for linear optics~\cite{defeliceQuantumLinearOptics2022},
describing the behaviour of bosonic maps in the category $\bf{Vect}_\mathbb{N}$
of linear operators on the bosonic Fock space.
The second is a complete axiomatisation $\interp{\cdot}_d: \bf{ZXW}_d \xrightarrow{\sim} \bf{Vect}_d$
of the category of linear maps between qudits~\cite{poorCompletenessArbitraryFinite2023}.
In \cref{sec:infinite-zw} and \cref{sec:photonics} we present our main contributions:
(1) we introduce a more expressive calculus $\bf{ZW}_\infty$ for the category of
linear operators on the Fock space $\bf{Vect}_\mathbb{N}$,
(2) we give a truncated interpretation $\mathcal{T}_d : \bf{ZW}_\infty \to \bf{ZXW}_d$
and prove \cref{th:lifting} which allows to derive equalities in $\bf{ZW}_\infty$
by lifting them from $\bf{ZXW}_d$ for \enquote{big enough} $d$,
(3) we give an axiomatisation of $\bf{ZW}_\infty$ for which we prove soundness via
the lifting theorem.
We end by exploring different applications of this graphical calculus in the
field of non-linear optics (\cref{sec:hamiltonians}).
We give a general method for representing quantum optical Hamiltonians in $\bf{ZW}_\infty$,
including Kerr media and Jaynes-Cummings light-matter interaction.
Building on~\cite{shaikhHowSumExponentiate2022}, we show how to exponentiate
these Hamiltonians using diagrammatic techniques.

\section{Preliminaries}\label{sec:preliminaries}

\setlength{\abovedisplayskip}{0pt}
\setlength{\belowdisplayskip}{0pt}

\subsection{Fock space}

Consider the state space of a single bosonic mode $\interp{1} = \bigoplus_{n=0}^\infty \mathbb{C}$.
An element in this space is usually denoted as an infinite sum
$\ket{a} = \sum_{n = 0}^\infty a_n \ket{n}$.
Depending on the chosen boundedness condition for these sequences,
we get different classes of operators acting on $\interp{1}$.
In their work on the quantum harmonic oscillator, Vicary~\cite{vicaryCategoricalFrameworkQuantum2008} imposed
a strong normalising condition on sequences, of the form $\sum_nc^na_n < \infty$
for any $c \in \mathbb{C}$, interpreting $\bigoplus$ as an infinite biproduct.
The valid maps are those that preserve this normalising condition.
They include phases, coherent states and creation operators.
In this paper, we interpret $\bigoplus$ as an infinite direct sum.
The valid states are sequences with finitely many non-zero terms.
This means we do not allow coherent states in our spaces.
The operators that we consider are those that send finite states to finite states;
they form a category $\bf{Vect}_\mathbb{N}$.

For a system of $n$ bosonic particles occupying $m$ possible modes,
the state space is given by the Fock space defined as follows.
\[
  \interp{m} = \bigoplus_{n=0}^\infty (\mathbb{C}^m)^{\tilde{\otimes} n}
  \simeq \interp{1}^{\otimes m}
\]
where $\tilde{\otimes}$ is the symmetrised tensor product, i.e.\@ the quotient of the
tensor product given by identifying $x = x_1 \otimes \cdots \otimes x_n$ with
$y = y_1 \otimes \cdots \otimes y_n$ whenever there is a permutation $\sigma$ such
that $\sigma(x) = y$. The isomorphism $\interp{m} \simeq \interp{1}^{\otimes m}$ is given
by counting the number of photons occupying each mode~\cite[Proposition 3.2]{defeliceQuantumLinearOptics2022}.
Let $\bf{Vect}$ be the symmetric monoidal category of vector spaces
and linear maps with tensor product $\otimes$. Symmetrised spaces of the form
$(\mathbb{C}^m)^{\tilde{\otimes} n}$ are objects of this category.
We define the infinite direct sum $\bigoplus_{n=0}^\infty A_n$ for
$A_n \in \bf{Vect}$ as the set of sequences $(a_0, a_1, \dots)$ with $a_n \in A_n$
such that $a_n = 0$ except for finitely many $n$. This means we only allow states
$\ket{a} = \sum_{n = 0}^{k} a_n \ket{n} \in \interp{1}$ with a finite number $k$
of particles.
We can now build our base category $\bf{Vect}_\mathbb{N}$ of linear maps on the Fock space:
objects are finite tensor products of the complex space $\interp{1}$,
morphisms are linear maps $f: \interp{m} \simeq (\interp{1})^{\otimes m} \to  (\interp{1})^{\otimes k} \simeq \interp{k}$.
Note that we allow unbounded linear maps as long as their domain and codomain
are in the infinite direct sum. As argued by Vicary~\cite[Section 6]{vicaryCategoricalFrameworkQuantum2008},
these are necessary to understand the algebraic structure of the Fock space.

Note that $\bf{Vect}_\mathbb{N}$ is not compact closed.
In fact, generalising the standard cups and caps would give a state without a
fixed number of particles $\sum_n \ket{nn} \notin \bf{Vect}_\mathbb{N}$.
$\bf{Vect}_\mathbb{N}$ admits a dagger structure only on a subclass of morphisms.
Indeed, the valid map $\epsilon: \interp{1} \to 1$, with $\epsilon \ket{n} = 1$
for all $n$, gives an invalid state $\sum_n^\infty \ket{n}$ when taking the adjoint.
However, the dagger is well-defined for most morphisms of interest, and these
are closed under composition.
It is also easy to see that $\bf{Vect}_\mathbb{N}$ is enriched in weighted finite sums
over the complex numbers.
We now consider an interesting subclass of processes in $\bf{Vect}_\mathbb{N}$,
the ones that may be performed using linear optics and creation/annihilation of single particles.

\subsection{Bosonic nodes}\label{sec:qpath}
\setlength{\abovedisplayskip}{5pt}
\setlength{\belowdisplayskip}{5pt}

Bosonic nodes, or split and merge maps, are linear operators on the Fock space
exhibiting a bialgebra structure, equivalent to the binomial bialgebra on
polynomials \cite{mullinFoundationsCombinatorialTheory1970, figueroaCombinatorialHopfAlgebras2005}.
In categorical quantum mechanics, they were first studied by Vicary and Fiore~\cite{vicaryCategoricalFrameworkQuantum2008, fioreAxiomaticsCombinatorialModel2015}
who characterised the map $b$ that copies coherent states of the form
$\ket{\alpha} = \sum_{n}\frac{\alpha^n}{\sqrt{n!}} \ket{n}$, in the sense that
$b \ket{\alpha} = \ket{\alpha} \otimes \ket{\alpha}$.
Coherent states are usually defined as eigenvectors of the bosonic creation operator
$a^\dagger\ket{\alpha} = \alpha \ket{\alpha}$. They account for Poissonian
distributions of photons from coherent light sources such as lasers~\cite{fujiiIntroductionCoherentStates2002}.
The split map was later studied as an anyonic generalisation of the W algebra
in~\cite{hadzihasanovicAlgebraEntanglementGeometry2017} and as a generator for Feynman diagrams
in~\cite{shaikhCategoricalSemanticsFeynman2022}.

The QPath calculus~\cite{defeliceQuantumLinearOptics2022} is graphical calculus
for linear optics based on the bosonic split and merge maps.
A wire in the QPath calculus corresponds to a bosonic mode and is thus interpreted
as the space $\interp{1}$ with basis given by occupation numbers $\ket{n}$ for $n \in \mathbb{N}$.
The interpretation of $m$ wires is obtained using the tensor product
$\interp{m} \simeq \interp{1}^{\otimes m}$.
The main generator of QPath is the bosonic \emph{split} map, defined as follows:
\[
  \tikzfig{qpath/generators/split}
  \quad \xmapsto{\interp{\cdot}} \quad
  \ket{n} \mapsto \sum_{k=0}^n \binom{n}{k}^{\frac{1}{2}} \ket{k} \ket{n-k}
\]
Its adjoint or dagger called \emph{merge}, is also a generator:
\[
  \tikzfig{qpath/generators/merge}
  \quad \xmapsto{\interp{\cdot}} \quad
  \ket{n, m} \mapsto \binom{n + m}{n}^{\frac{1}{2}} \ket{n+m}
\]
We also have \emph{endomorphisms} for any $r \in \mathbb{C}$:
\[
  \tikzfig{qpath/generators/phase}
  \quad \xmapsto{\interp{\cdot}} \quad
  Z(\vec{r}): \ket{n} \mapsto r^n \ket{n}
\]
or phase shifts when $r= e^{i\theta}$, then the symmetry, or \emph{swap}:
\[
  \tikzfig{qudit-generators/swap}
  \quad \xmapsto{\interp{\cdot}} \quad
  \ket{n, m} \mapsto \ket{m, n}
\]
and, $n$-photon states and effects:
\[
  \tikzfig{qpath/generators/states} \quad \xmapsto{\interp{\cdot}} \quad \ket{n}
  \qquad \qquad \qquad
  \tikzfig{qpath/generators/effects} \quad \xmapsto{\interp{\cdot}} \quad \bra{n}
\]
All the maps defined above send basis states to finite sums of basis states and
are therefore valid morphisms in $\bf{Vect}_\mathbb{N}$.
These generators satisfy the axioms of a bialgebra with
endomorphisms, along with additional rules involving occupied photon states~\cite{defeliceQuantumLinearOptics2022}.

\subsection{ZXW calculus}\label{sec:zxw}

The ZXW calculus is a complete diagrammatic language for qudit quantum
computing~\cite{poorCompletenessArbitraryFinite2023}.
Formally, let $\bf{Vect}_d$ be the symmetric monoidal category with objects tensor
products of qudits $\mathbb{C}^d$ and morphisms given by linear maps between them.
Completeness means that the interpretation
$\interp{\cdot}_d: \bf{ZXW}_d \to \bf{Vect}_d$ is an isomorphism.
We describe the generators of the ZXW calculus along with their interpretation.
Throughout the section, we fix a natural number $d$ and use the arbitrary complex
vectors $\overrightarrow{a} = (a_1,\dotsc, a_{d-1})$ and $\overrightarrow{b} = (b_1,\dotsc, b_{d-1})$.
Note that here we made a slight change for the presentation of the qudit ZXW calculus
in comparison to the version of~\cite{poorCompletenessArbitraryFinite2023}:
we use the X spider instead of the Hadamard node as a generator now, so the
definition of the former becomes a rule and the rule for the latter turns into a definition.
Also we removed the (H1) rule of~\cite{poorCompletenessArbitraryFinite2023} from
the rule set since we find it can be derived from other rules now.

\subsubsection{Generators of ZXW}\label{subsec:generators_n_interpretation}

The generators of $\bf{ZXW}_d$ together with their standard interpretation $\interp{\cdot}_d$ are:
\begin{itemize}
  \item The \emph{Z spider},
  \[
    \tikzfig{qudit-generators/generalgreenspiderqdit2}
    \quad \overset{\interp{\cdot}_d}{\longmapsto} \quad
    \sum_{j=0}^{d-1}a_j\ket{j}^{\otimes m}\bra{j}^{\otimes n}, \quad \text{where } a_0 = 1.
  \]

  \item The \emph{X spider}, with parameter $j$ which can be taken modulo $d$,
  \[
    \tikzfig{qudit-generators/quditrspiderclassicnm}
    \quad \overset{\interp{\cdot}_d}{\longmapsto} \hspace{-.3cm}
    \sum_{\substack{
      0 \leq i_1, \cdots, i_m,  j_1, \cdots, j_n \leq d-1 \\
      i_1+\cdots+ i_m+j \equiv j_1+\cdots +j_n \Mod{d}}
    } \hspace{-1.75cm} \ket{i_1, \cdots, i_m}\bra{j_1, \cdots, j_n},
  \]

  \item The \emph{W node},
  \[
    \tikzfig{definitions/w1ton}
    \quad \overset{\interp{\cdot}_d}{\longmapsto} \quad
    \ket{0\cdots0}\bra{0} + \sum_{i=1}^{d-1}(\ket{i0\cdots 00}+\cdots +\ket{00\cdots 0i})\bra{i}
  \]

  \item The \emph{swap},
  \[
    \tikzfig{qudit-generators/swap}
    \quad \overset{\interp{\cdot}_d}{\longmapsto} \quad
    \sum_{i, j=0}^{d-1}\ket{ji}\bra{ij}.
  \]
\end{itemize}

\subsubsection{Notations}

For convenience, we introduce the following notation which will be used throughout the paper:
\begin{itemize}
  \item The phase depicted by a green circle spider can be defined using the Z box:
  \[
    \tikzfig{definitions/circlegspiders1}
  \]
  where
  $\overrightarrow{\alpha} = (\alpha_1, \cdots, \alpha_{d-1})$,\,
  $e^{i\overrightarrow{\alpha}}=(e^{i\alpha_1}, \cdots, e^{i\alpha_{d-1}})$,\, and
  $\alpha_i \in [0, 2\pi)$.

  \item The cup and cap, i.e.\@ the qudit Bell state and its transpose, are defined as follows:
  \begin{gather}
    \tikzfig{definitions/compactstructures1}
    \qquad \qquad
    \tikzfig{definitions/compactstructures2}
    \tag{S3}\label{rule:S3}\refstepcounter{equation}
  \end{gather}

  \item We use a yellow $D$ box to denote the \emph{dualiser}, with the given interpretation:
  \begin{gather}
    \tikzfig{definitions/dualiser}
    \quad \overset{\interp{\cdot}_d}{\longmapsto} \quad
    \sum_{i = 0}^{d \minu 1} \ket{i} \bra{d-i}.
    \tag{Du}\label{rule:Du}\refstepcounter{equation}
  \end{gather}

  \item It is useful to define the yellow triangle in terms of the $W$ node and green spider:
  \begin{gather}
    \tikzfig{definitions/trianlge-def}
    \quad \overset{\interp{\cdot}_d}{\longmapsto} \quad
    I_d + \sum_{i=1}^{d \minu 1} \ket{0} \bra{i}
    \tag{YT}\label{rule:YT}\refstepcounter{equation}
  \end{gather}
\end{itemize}
Furthermore, some additional notations are presented in \cref{subsec:additional-notations}.

We refer to \cref{subsec:rules-zxw} for the rules of the calculus.

\section{Infinite ZW calculus}\label{sec:infinite-zw}

\begingroup

\allowdisplaybreaks

We introduce the graphical language $\bf{ZW}_\infty$ for reasoning about linear
operators on the Fock space.
By setting $d=\infty$, we generalise the qudit interpretation of each
ZXW generator to obtain novel maps on the Fock space.
Combining this with the QPath calculus, we obtain a (partial) axiomatisation
of linear operators in $\bf{Vect}_\mathbb{N}$.

\subsection{Generators}

The \emph{Z spider} may be interpreted as the following operator in $\bf{Vect}_\mathbb{N}$:
\begin{equation}
  \label{eq:inf-z-spider}
  \tikzfig{qudit-generators/generalgreenspiderqdit2}
  \quad \overset{\interp{\cdot}}{\longmapsto} \quad
  \sum_{i=0}^\infty a_i \ket{i}^{\otimes m}\bra{i}^{\otimes n},
  \quad \text{where } n > 0, \overrightarrow{a}=(a_1,\cdots, a_k, \cdots),\ a_0 \coloneqq 1.
\end{equation}\refstepcounter{equation}
Note that we require the number of inputs to be greater than $0$.
In fact, the unit Z spider $\sum_{n=0}^\infty \ket{n}$ does not have finite support
and is thus not allowed in $\bf{ZW}_\infty$.
In particular, we do not have cups because these are defined in terms of unit Z spiders.
However, we do have caps and counit Z spiders since these map finite states to finite states.
Moreover, Z boxes where $n=0$ and $\overrightarrow{b}$ has a \emph{finite support} are allowed as generators, that is, for some finite $N \in \N$:
\begin{equation}
  \label{eq:inf-z-state}
  \tikzfig{qudit-generators/greenspiderstate}
  \quad \overset{\interp{\cdot}}{\longmapsto} \quad
  \sum_{i=0}^N a_i \ket{i}^{\otimes m},
  \quad \text{where } n > 0, \overrightarrow{a}=(a_1,\cdots, a_N, 0, \cdots),\ a_0 \coloneqq 1.
\end{equation}\refstepcounter{equation}
Note that QPath endomorphisms are Z boxes of the form:
\begin{gather*}
  \tikzfig{qpath/generators/phase}
  \qquad = \qquad
  \tikzfig{qpath/truncated/phase-def}
\end{gather*}
where $r^{\overrightarrow{N}} = (r^1,\,r^2,\,\cdots,\,r^{k}, \cdots)$.
The \emph{W node} is straightforwardly generalised to infinite dimensions:
\[
  \tikzfig{definitions/w1ton}
  \quad \overset{\interp{\cdot}}{\longmapsto} \quad
  \ket{0\cdots0}\bra{0} + \sum_{i=1}^{\infty}(\ket{i0\cdots 0}+\cdots +\ket{0\cdots 0i})\bra{i}
\]
We also take the dagger of W as a generator. One may check that both maps preserve finite states.
Let us now consider the \emph{X spider} from the ZXW calculus. Its action on the
basis in $\bf{ZXW}_d$ is given by addition modulo $d$. We can consider
addition in $\mathbb{N}$ as a natural generalisation for the X spider in the Fock space.
However, this does not yield a group structure on $\mathbb{N}$ and the Frobenius law fails
in infinite dimensions. Moreover, we already have a node that performs addition
of natural numbers, the \emph{split map}:
\[
  \tikzfig{qpath/generators/split}
  \quad \xmapsto{\interp{\cdot}} \quad
  \sum_{n=0}^\infty \sum_{k=0}^n \binom{n}{k}^{\frac{1}{2}} \ket{k, n-k}\bra{n}
\]
We also take \emph{merge} as a generator of $\bf{ZW}_\infty$.
Instead of a Frobenius structure we get a bialgebra between merge and split.
The binomial coefficients in fact ensure that the bialgebra law holds.
However, this also means that the interaction of the split map with Z spiders
is a bialgebra only up to coefficients.
We also take the $n$-particle creation $\tikzfig{qpath/generators/state-1}$
and annihilation $\tikzfig{qpath/generators/effect-1}$ as generators of $\bf{ZW}_\infty$.

\subsection{Axioms}

The axioms of $\bf{ZW}_\infty$ are obtained by merging the rules of QPath and those
of ZXW governing the Z and W nodes.
We only give a partial axiomatisation with the rules that we use in \cref{sec:hamiltonians}.
The Z spider satisfies the axioms of a non-unital
special Frobenius algebra, an algebraic structure which characterises bases in
infinite dimensional vector spaces~\cite{abramskyAlgebrasNonunitalFrobenius2012}.
The rules of the W algebra as well as Axioms \eqref{rule:iBsj} and \eqref{rule:iK0} are directly lifted from qudit ZXW.
Bosonic nodes form a bialgebra with a semiring of endomorphisms.
All the rules of QPath from~\cite{defeliceQuantumLinearOptics2022} hold in $\bf{ZW}_\infty$,
here we only give the ones we use.
We moreover give a set of interaction rules relating Z, W and bosonic nodes.
Rule \eqref{rule:iW1} shows the behaviour of the W algebra.
\eqref{rule:bW1} relates the split map and W states and replaces the ``branching''
rule of QPath.
The bialgebra law between bosonic nodes and Z spiders only holds up to factorial coefficients \eqref{rule:bZBA}.
It holds on the nose if the binomial coefficients are removed
from the definition of bosonic nodes. However then, the bialgebra law between
split and merge would fail.
We also have a new version of the ZXW trialgebra law \eqref{rule:bTA} with bosonic nodes replacing X spiders, proved in \cref{prop:trialgebra}.

\subsubsection*{Non-unital Frobenius algebra}

\setlength{\abovedisplayskip}{0pt}
\setlength{\belowdisplayskip}{0pt}
\begin{gather*}
  \tikzfig{axioms/gengspiderfusedit}
  \tag{\inftyfy{S1}}\label{rule:iS1}\refstepcounter{equation}
\end{gather*}

\subsubsection*{Rules generalised from qudit ZW}
\begin{multicols}{2}
  \setlength{\abovedisplayskip}{0pt}
  \setlength{\belowdisplayskip}{0pt}
  \noindent
  \begin{gather}
    \tikzfig{axioms/w-bialgebra}
    \tag{\inftyfy{BZW}}\label{rule:iBZW}\refstepcounter{equation} \\
    \tikzfig{axioms/phasecopydit}
    \tag{\inftyfy{Pcy}}\label{rule:iPcy}\refstepcounter{equation} \\
    \tikzfig{axioms/additiondit}
    \tag{\inftyfy{AD}}\label{rule:iAD}\refstepcounter{equation} \\ \columnbreak
    \tikzfig{axioms/wsymetrydit}
    \tag{\inftyfy{Sym}}\label{rule:iSym}\refstepcounter{equation} \\
    \tikzfig{axioms/associatedit}
    \tag{\inftyfy{Aso}}\label{rule:iAso}\refstepcounter{equation} \\
    \tikzfig{axioms/w-w-algebra}
    \tag{\inftyfy{WW}}\label{rule:iWW}\refstepcounter{equation}
  \end{gather}
\end{multicols}

\subsubsection*{Rules from QPath}
\begin{multicols}{2}
  \setlength{\abovedisplayskip}{0pt}
  \setlength{\belowdisplayskip}{0pt}
  \noindent
  \begin{gather}
    \tikzfig{qpath/axioms/split-sym}
    \tag{bSym}\label{rule:bSym}\refstepcounter{equation} \\
    \tikzfig{qpath/axioms/split-aso}
    \tag{bAso}\label{rule:bAso}\refstepcounter{equation} \\
    \tikzfig{qpath/axioms/split-merge-bialgebra}
    \tag{bBA}\label{rule:bBA}\refstepcounter{equation} \\
    \tikzfig{qpath/axioms/split-id}
    \tag{bId}\label{rule:bId}\refstepcounter{equation}
  \end{gather}
\end{multicols}

\subsubsection*{Interaction rules}
\begingroup
\begin{multicols}{2}
  \setlength{\abovedisplayskip}{0pt}
  \setlength{\belowdisplayskip}{0pt}
  \noindent
  \begin{gather}
    \tikzfig{infinite-zw/bialgebra-coeffs}
    \tag{bZBA}\label{rule:bZBA}\refstepcounter{equation} \\
    \scalebox{1}{\tikzfig{infinite-zw/trialgebraAZW}}
    \tag{bTA}\label{rule:bTA}\refstepcounter{equation} \\
    \tikzfig{infinite-zw/branching-new}
    \tag{bW1}\label{rule:bW1}\refstepcounter{equation} \\
    \tikzfig{infinite-zw/axiom-k0}
    \tag{\inftyfy{K0}}\label{rule:iK0}\refstepcounter{equation} \\
    \tikzfig{lemmas/wone}
    \tag{\inftyfy{W1}}\label{rule:iW1}\refstepcounter{equation} \\
    \tikzfig{infinite-zw/axiom-bsj}
    \tag{\inftyfy{Bsj}}\label{rule:iBsj}\refstepcounter{equation}
  \end{gather}
  \vspace{\jot}
  \begin{align*}
    \text{where}&\quad
    e_n=(\underbrace{0,\dotsc, 1}_{n}, 0, \dotsc)
  \end{align*}
  \vspace*{-1cm}
\end{multicols}
\endgroup

\endgroup

\section{Photonics in ZXW}\label{sec:photonics}

\begingroup

\allowdisplaybreaks

In this section, we give a method for proving novel equations in $\bf{ZW}_\infty$
by rewriting in the ZXW calculus.
We give a truncated interpretation $\mathcal{T}_d: \bf{ZW}_\infty \to \bf{ZXW}_d$
parametrised by the qudit dimension $d$.
We prove the \emph{lifting} theorem: any equality in the Fock space of the form
$\interp{D} = \interp{D'}$ can be proved by showing that there exists some $N \in \mathbb{N}$
such that $\mathcal{T}_d(D) P_N = \mathcal{T}_d(D') P_N$ for any $d > N$,
where $P_N$ denotes the projector on the $(d - N)$-particle sector of the input qudits.
We use this theorem to give diagrammatic proofs of soundness for the axioms of
the infinite ZW calculus.
%
%

\subsection{Truncation}

\setlength{\abovedisplayskip}{5pt}
\setlength{\belowdisplayskip}{5pt}

Truncation consists in giving a finite dimensional description of operators on an
infinite-dimen\-sional space.
We define this as a mapping $\mathcal{T}_d: \bf{ZW}_\infty \to \bf{ZXW}_d$.
We introduce a useful component in $\bf{ZXW}_d$ representing the projector on
the $d$-particle sector of a pair of qudits:
\begin{equation}
  \label{eq:d-projector}
  \tikzfig{qpath/truncated/projector-zxw-lhs}
  \quad \coloneqq \quad
  \tikzfig{qpath/truncated/projector-zxw-rhs}
  \quad \xmapsto{\interp{\cdot}_d} \quad
  \sum_{\substack{a_1, a_2 = 0 \\ a_1 + a_2  < d}}^{d-1} \ket{a_1, a_2}\bra{a_1, a_2}
\end{equation}
where $\overrightarrow{N} = (1,\,2\,\cdots,\,d - 1)$, $\overrightarrow{-1} = (-1,\,\cdots,\,-1)$, and each operation is defined elementwise.
We can use this to represent bosonic nodes in ZXW.
In fact, the X spider performs addition of basis states modulo $d$, and this agrees
with standard addition whenever the input is in the $d$-particle sector.
We can thus model the truncated split map as an X spider with a projector and
Z boxes for the binomial coefficients.\\
The $d$-truncation $\mathcal{T}_d: \bf{ZW}_\infty \to \bf{ZXW}_d$ is defined on generators
as follows:
\begin{enumerate}
 \item Z spiders and W nodes are mapped to their qudit versions in $\bf{ZXW}_d$.
 \item The split map has the following truncation:
 \begin{equation}
   \label{eq:split_def}
   \tikzfig{qpath/generators/split}
   \quad \xmapsto{\mathcal{T}_d} \quad
   \tikzfig{qpath/truncated/split-def}
   \quad \xmapsto{\interp{\cdot}_d} \quad
   \sum_{n=0}^{d-1}\sum_{k=0}^n \binom{n}{k}^{\frac{1}{2}} \ket{k, n-k}\bra{n}
 \end{equation}
 \item The merge map has the following truncation:
 \begin{equation}
     \label{eq:merge_def}
     \tikzfig{qpath/generators/merge}
     \quad \xmapsto{\mathcal{T}_d} \quad
     \tikzfig{qpath/truncated/merge-def}
     \quad \xmapsto{\interp{\cdot}_d} \quad
     \sum_{\substack{n, \, m = 0  n + m < d}}^{d-1} \binom{n + m}{n}^{\frac{1}{2}} \ket{n+m}\bra{n, m}
 \end{equation}
 \item Particle creations and annihilations correspond to unit X spiders.
 \begin{equation}
   \scalebox{.9}{\tikzfig{qpath/truncated/state-def}}
   \qquad \qquad
   \scalebox{.9}{\tikzfig{qpath/truncated/effect-def}}
   \label{eq:ket-def}
 \end{equation}
\end{enumerate}

This mapping respects all the axioms of the infinite ZW calculus, except for the
bialgebra law between split and merge maps \eqref{rule:bBA}. This rule only holds in $\bf{ZXW}_d$
up to a projector which ensures the total number of particles does not exceed $d$.
That is, in the truncated interpretation we have:
\[
  \scalebox{1}{\tikzfig{qpath/truncated/truncated-bialgebra}}
\]
Starting from the projector on $2$ modes, we may construct the $d$-particle
projector on $m$ modes recursively:
\[
  \scalebox{.9}{\tikzfig{qpath/truncated/projectors-def}}
\]
One may show that the following rewrites are valid in $\bf{ZXW}_d$ for any $d$.
\begin{multicols}{2}
  \setlength{\abovedisplayskip}{0pt}
  \setlength{\belowdisplayskip}{0pt}
  \noindent
  \begin{gather}
    \tikzfig{qpath/truncated/projector-square}
    \tag{PP}\label{rule:PP}\refstepcounter{equation} \\
    \tikzfig{qpath/truncated/projector-spider-commute}
    \tag{PZ}\label{rule:PZ}\refstepcounter{equation} \\ \columnbreak
    \tikzfig{qpath/truncated/projector-w-copy}
    \tag{PW}\label{rule:PW}\refstepcounter{equation} \\
    \tikzfig{qpath/truncated/projector-split-copy}
    \tag{PS}\label{rule:PS}\refstepcounter{equation} \\
    \tikzfig{qpath/truncated/projector-state}
    \tag{PK}\label{rule:PK}\refstepcounter{equation}
  \end{gather}
  \vspace*{\jot}
  \begin{align*}
    \text{where}&\quad
    \vec{1}_j = (\underbrace{1,\dotsc, 1}_{d - j}, 0, \dotsc, 0)
  \end{align*}
\end{multicols}
In fact projectors commute with any map that preserves the number of photons.
Using particle creations and annihilations we may also construct the projector
on the $n$-particle sector for $n < d$ as follows:
\begin{equation}\label{eq:proj-n-def}
  \tikzfig{qpath/truncated/projector-n-def}
\end{equation}

\subsection{Lifting}

We now prove our main result: equalities in the infinite dimensional
calculus may be derived by rewriting in the truncated interpretation.
In order to relate the infinite and truncated interpretations, we use two
ingredients. First, the embedding $\mathcal{E}: \bf{Vect}_d \to \bf{Vect}_\mathbb{N}$
which views any linear map between qudits as a map on the Fock space, acting on the
first $d$ dimensions of each mode. Formally, we have
$\mathcal{E}(f)\ket{x_1, \dots, x_m} = f \ket{x_1, \dots, x_m}$ whenever $x_i < d$
for $i = 1, \dots, m$ and $\mathcal{E}(f)\ket{x_1, \dots, x_m} = 0$ otherwise,
for any linear map $f$ on $m$ qudits. It is easy to show that this embedding is
a faithful monoidal functor.
Second, we use the projector on the $n$-particle sector of the Fock space,
i.e.\@ the linear operator $P_n : \interp{m} \to \interp{m}$ such that
$P_n\ket{x_1, \dots, x_m} = \ket{x_1, \dots, x_m}$ when $\sum_{i=1}^m x_i < n$ and
$P_n\ket{x_1, \dots, x_m} = 0$ otherwise.
The following lemma characterises the infinite interpretation of a diagram in $\bf{ZW}_\infty$ in terms of the truncations.

\begin{lemma}
  \label{lemma:lifting}
  For any diagram $D \in \bf{ZW}_\infty$ and $n \in \mathbb{N}$ there is $d^\ast \in \mathbb{N}$
  such that, whenever $d > d^\ast$, we have:
  \[
    \interp{D} P_n = \mathcal{E}(\interp{\mathcal{T}_d(D)}_d)P_n
  \]
\end{lemma}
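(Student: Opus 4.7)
The plan is to prove the lemma by structural induction on the diagram $D$, tracking how each generator propagates bounds on the total photon number. The key observation is that every generator $g$ of $\bf{ZW}_\infty$ is \emph{particle-bounded}: there exists a non-decreasing function $\nu_g : \mathbb{N} \to \mathbb{N}$ such that $\interp{g}$ maps the subspace of states with total particle number less than $n$ into the subspace of states with total particle number less than $\nu_g(n)$. Concretely, Z spiders and W nodes preserve particle number (so $\nu(n) = n$), the split and merge maps preserve the total particle number across their modes, the swap obviously preserves it, $k$-particle creation shifts it by $+k$, and $k$-particle annihilation either leaves it unchanged or sends the state to zero. The inductive lift of this property says that if $D$ is any diagram, there is a function $\nu_D$ such that $\interp{D} P_n = P_{\nu_D(n)} \interp{D} P_n$.

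First I would dispatch the base case, verifying for each generator $g$ that there is a threshold $d^\ast = \nu_g(n)$ above which $\mathcal{E}(\interp{\mathcal{T}_d(g)}_d) P_n = \interp{g} P_n$. For Z spiders, W nodes, creations, annihilations, and the swap this is immediate from comparing the interpretations in \eqref{eq:inf-z-spider}, \eqref{eq:inf-z-state} and the ZXW generator definitions: as long as $d$ exceeds every photon number on which the operator acts non-trivially, the qudit truncation coincides with the infinite formula, and the embedding $\mathcal{E}$ does nothing further. For the split and merge maps, the truncated definitions \eqref{eq:split_def}, \eqref{eq:merge_def} differ from their infinite counterparts only by the presence of the $d$-particle projector, so once $d > n$ (for split) or $d > 2n$ (for merge), the projector acts as identity on the relevant particle sector and the formulas agree.

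For the inductive step, I would treat sequential and parallel composition separately. For a sequential composite $D = D_2 \circ D_1$, the induction hypothesis applied to $D_1$ with bound $n$ yields $d_1^\ast$ and a new particle bound $\nu_{D_1}(n)$; applied to $D_2$ with bound $\nu_{D_1}(n)$ it yields $d_2^\ast$. Taking $d^\ast = \max(d_1^\ast, d_2^\ast)$, I would write
\[
  \interp{D_2 \circ D_1} P_n
    = \interp{D_2} P_{\nu_{D_1}(n)} \interp{D_1} P_n
    = \mathcal{E}(\interp{\mathcal{T}_d(D_2)}_d)\, P_{\nu_{D_1}(n)}\, \mathcal{E}(\interp{\mathcal{T}_d(D_1)}_d)\, P_n,
\]
using that $\mathcal{E}$ is a functor and absorbing the middle projector. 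For a parallel composite $D = D_1 \otimes D_2$, the key elementary fact is that the total-particle projector dominates the factorwise ones: $P_n^{(m_1+m_2)} = (P_n^{(m_1)} \otimes P_n^{(m_2)})\, P_n^{(m_1+m_2)}$. This reduces the tensor case to invoking the induction hypothesis on each factor with bound $n$, then using monoidality of $\interp{\cdot}$, $\mathcal{T}_d$, and $\mathcal{E}$.

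The main obstacle I anticipate is simply keeping the bookkeeping of particle bounds and mode counts honest through the inductive step — especially for parallel composition, where one must verify that the projector $P_n$ on the total Fock space factors correctly across the tensor split, and that the embedding $\mathcal{E}$ commutes appropriately with these projectors. Everything else amounts to checking that, once $d$ exceeds the relevant particle bound, the truncation projectors appearing in $\mathcal{T}_d$ (in particular the ones in \eqref{eq:split_def} and \eqref{eq:merge_def}) act as the identity on the subspace cut out by $P_n$, so that no discrepancy between modular and ordinary arithmetic arises.
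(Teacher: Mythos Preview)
Your approach is essentially the paper's: structural induction on diagrams, tracking how each generator propagates a bound on total particle number, so that a sufficiently large $d$ makes the truncated and infinite interpretations agree on the $n$-particle sector. The paper phrases the induction via a layer decomposition (each layer is a single generator tensored with identities) rather than treating $\otimes$ and $\circ$ as separate inductive cases, but this is a cosmetic difference.

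There is one concrete error. You assert that Z spiders preserve particle number, with $\nu(n)=n$. This is false: a Z spider with one input and $m$ outputs sends $\ket{j}$ to $a_j\ket{j}^{\otimes m}$, so the total particle count goes from $j$ to $mj$. The paper handles this correctly by setting $d^\ast = a\cdot n$ for a Z spider with $a$ output wires (and, separately, $d^\ast = n+t$ for a Z-box state whose finite support reaches $t$). Your framework absorbs this without difficulty --- just take $\nu_g(n)=m\cdot n$ for such a spider --- but the bound as you stated it would yield a $d^\ast$ too small for the inductive step to go through. With that correction, your argument matches the paper's.
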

\begin{proof}
  To prove the statement it is sufficient that, given $D$ and $n$, we can find a ``big enough'' dimension $d^\ast$,
  such that $\interp{D} \ket{x_1, \dots, x_m} = \mathcal{E}(\interp{\mathcal{T}_{^\ast}(D)}_{d^\ast})\ket{x_1, \dots, x_m}$
  whenever $\sum_{i=1}^m x_i < n$.
  We prove this by induction over the recursive definition of string diagrams.
  Indeed, a diagram in $\bf{ZW}_\infty$ may be viewed as a sequence of layers
  where, in each layer, a single generator appears tensored by identities.
  Therefore, it is sufficient to find $d^\ast$ given an $n$ for each generating layer.
  Note first that W nodes preserve the total number of photons, similarly for
  bosonic nodes, as well as for Z spiders $1 \to 1$ and identities.
  Thus, for layers involving these generators, we can set $d^\ast = n$.
  The Z spiders with $a > 1$ output wires~\eqref{eq:inf-z-spider} can at most multiply the total number
  of photons by $a$, therefore, we can set $d^\ast = a \cdot n$.
  Finite sequences, i.e.\@ Z spider states~\eqref{eq:inf-z-state} only contribute a finite number of photons
  $t$, so we can set $d^\ast = n + t$ in this case.
  Finally, in case of a photon preparation we can set $d^\ast = n + 1$.
  Since the dimension remains finite after each generator, by induction, we can
  find a finite $d^\ast$ for any $D$ and $n$.
\end{proof}

Note that there are potentially smaller values of $d^\ast$ for which the statement above
holds, as well as intermediate dimensions that we can assign to the internal wires of $D$.
Finding optimal values for these dimensions would allow more efficient classical simulation of photonic circuits by tensor network contraction.

\begin{theorem}[Lifting]
  \label{th:lifting}
  For any $D, D': m \to m' \in \bf{ZW}_\infty$ the following are equivalent:
  \begin{enumerate}
    \item In $\bf{Vect}_\mathbb{N}$:
    \[
      \interp{\tikzfig{infinite-zw/diagram}} = \interp{\tikzfig{infinite-zw/diagram-prime}}
    \]
    \item For any $n \in \mathbb{N}$ there is a dimension $d^\ast$ such that,
    for all $d > d^\ast$, in $\bf{ZXW}_d$:
    \[ \tikzfig{infinite-zw/diagram-proj} = \tikzfig{infinite-zw/diagram-prime-proj}
    \]
  \end{enumerate}
\end{theorem}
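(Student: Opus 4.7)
The plan is to reduce both implications to \cref{lemma:lifting}, which relates $\interp{D}$ to $\mathcal{E}(\interp{\mathcal{T}_d(D)}_d)$ up to a finite-particle projector, combined with soundness and completeness of the qudit ZXW calculus. Before addressing either direction, I would establish the compatibility observation that, for $n < d$, the ZXW diagram $P_n$ of \eqref{eq:proj-n-def} interprets in $\bf{Vect}_d$ precisely to the restriction of the infinite Fock-space projector to the first $d$ levels of each mode; equivalently, the embedding intertwines the two projectors, in the sense that $\mathcal{E}(f \cdot \interp{P_n}_d) = \mathcal{E}(f) \cdot P_n$ for every $f$ in $\bf{Vect}_d$ whenever $n < d$. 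The point is that any basis state with total occupation below $n < d$ automatically has every mode occupation below $d$, so $P_n$ acts trivially outside the image of $\mathcal{E}$.

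For the direction $(1) \Rightarrow (2)$, I would fix $n$, apply \cref{lemma:lifting} separately to $D$ and $D'$ to obtain thresholds $d^\ast_D$ and $d^\ast_{D'}$, and then set $d^\ast := \max(d^\ast_D, d^\ast_{D'}, n)$. For any $d > d^\ast$ the hypothesis $\interp{D} = \interp{D'}$ and the lemma force $\mathcal{E}(\interp{\mathcal{T}_d(D)}_d)\, P_n = \mathcal{E}(\interp{\mathcal{T}_d(D')}_d)\, P_n$. Pulling $P_n$ under $\mathcal{E}$ via the compatibility observation and invoking faithfulness of the monoidal functor $\mathcal{E}$ yields the semantic equality $\interp{\mathcal{T}_d(D)}_d\, \interp{P_n}_d = \interp{\mathcal{T}_d(D')}_d\, \interp{P_n}_d$ in $\bf{Vect}_d$, and completeness of the qudit ZXW calculus then lifts this to the required diagrammatic equality $\mathcal{T}_d(D)\, P_n = \mathcal{T}_d(D')\, P_n$ in $\bf{ZXW}_d$.

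For the reverse direction $(2) \Rightarrow (1)$, I would verify $\interp{D} = \interp{D'}$ on each basis state $\ket{x_1, \dots, x_m}$, noting that its total particle number $k := \sum_i x_i$ is finite. Picking any $n > k$ and taking $d$ larger than both the threshold $d^\ast$ supplied by the hypothesis for this $n$ and the thresholds produced by \cref{lemma:lifting} for $D$ and $D'$ at $n$, soundness of the qudit ZXW calculus applied to the assumption gives $\interp{\mathcal{T}_d(D)}_d\, \interp{P_n}_d = \interp{\mathcal{T}_d(D')}_d\, \interp{P_n}_d$. Pushing this equality through $\mathcal{E}$ using the compatibility observation and then invoking \cref{lemma:lifting} yields $\interp{D}\, P_n = \interp{D'}\, P_n$, and since $P_n$ fixes $\ket{x_1, \dots, x_m}$ by choice of $n$, this is precisely the pointwise equality needed.

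The step I expect to require the most care is the preliminary compatibility between the ZXW diagram $P_n$ and the infinite-dimensional projector under $\mathcal{E}$, since it is the hinge of both directions. Once that is pinned down, both implications reduce to routine applications of \cref{lemma:lifting} together with soundness and completeness of $\bf{ZXW}_d$. Verifying it in the multi-mode case relies on combining the inductive construction of the $d$-particle projector on $m$ modes with the definition \eqref{eq:proj-n-def} itself, and checking that the resulting diagram kills exactly those basis states whose total occupation is at least $n$, matching the support of the infinite $P_n$ up to truncation.
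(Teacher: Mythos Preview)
Your proposal is correct and follows essentially the same approach as the paper: both directions hinge on \cref{lemma:lifting}, the identification of the ZXW projector diagram $\tilde{P}_n$ with the restriction of the Fock-space projector $P_n$ under $\mathcal{E}$, and faithfulness of $\mathcal{E}$ together with soundness/completeness of $\bf{ZXW}_d$. The paper presents the argument as a short chain of equivalences rather than two separate implications, and it leaves the compatibility of $\tilde{P}_n$ with $P_n$ and the need to take the maximum of the two thresholds implicit; your version simply makes these points explicit.
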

\begin{proof}
  First note that $\interp{D} = \interp{D'}$ if and only if $\interp{D}P_n = \interp{D'}P_n$
  for any $n \in \mathbb{N}$.
  By \cref{lemma:lifting}, this is equivalent to: for any $n$ there is a $d^\ast$
  such that for all $d > d^\ast$ we have
  $\mathcal{E}(\interp{\mathcal{T}_d(D)}_d)P_n = \mathcal{E}(\interp{\mathcal{T}_d(D')}_d)P_n$.
  Denoting by $\tilde{P}_n$ the projector given in \cref{eq:proj-n-def}, we have
  that $\mathcal{E}(\interp{\mathcal{T}_d(D)\tilde{P}_n}_d) = \mathcal{E}(\interp{\mathcal{T}_d(D')\tilde{P}_n}_d)$
  Finally, using faithfulness of the embedding $\mathcal{E}$ and completeness of the ZXW calculus, we obtain $\mathcal{T}_d(D) \tilde{P}_n  = \mathcal{T}_d(D') \tilde{P}_n$.
\end{proof}

This means in particular that if $\mathcal{T}_d(D) = \mathcal{T}_d(D')$ for any $d \in \mathbb{N}$,
then $\interp{D} = \interp{D'}$.
However, the theorem is stronger than this. To prove $\interp{D} = \interp{D'}$, it
is sufficient to show that there exists $N \in \mathbb{N}$ such that for any $d > N$ we have:
\[
  \tikzfig{infinite-zw/zxw-proj} = \tikzfig{infinite-zw/zxw-prime-proj}
\]
This is proved by setting $d^\ast = n + N$ given any $n$, and is still a weaker statement
as the difference between $d^\ast$ and $n$ may not be constant.
Most of the rules of ZXW are independent of the dimension $d$, making it easier to
rewrite while preserving this condition.

\begin{remark}
    We say that an equation $D = D'$ holds in $\bf{ZW}_\infty$ whenever $\interp{D} = \interp{D'}$ with the standard interpretation $\interp{\cdot} : \bf{ZW}_\infty \to \bf{Vect}_\mathbb{N}$.
    We say that an equation holds in $\bf{ZXW}_d$ if it holds in the rewriting system
    given in \cref{sec:zxw}, and we indicate the rules used.
    Whenever the split or merge map appears in a statement about $\bf{ZXW}_d$,
    we use it as syntactic sugar for its truncation as given in \cref{eq:split_def,eq:merge_def}.
\end{remark}

\begin{restatable}{proposition}{trialgebra}
  \label{prop:trialgebra}
  In $\bf{ZW}_\infty$, the following holds:
  \[
    \tikzfig{infinite-zw/trialgebraAZW}
  \]
\end{restatable}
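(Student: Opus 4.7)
The natural strategy is to invoke the Lifting Theorem (\cref{th:lifting}): it suffices to show that for every $n \in \N$ there exists a threshold $d^\ast$ such that, for all $d > d^\ast$, the truncated interpretations $\mathcal{T}_d(\text{LHS})\tilde P_n$ and $\mathcal{T}_d(\text{RHS})\tilde P_n$ coincide in $\bf{ZXW}_d$. This reduces the infinite-dimensional statement to a diagrammatic calculation in the complete qudit calculus, where we already have the ZXW trialgebra law relating Z spider, W node, and X spider on the nose.

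The plan is as follows. First, apply $\mathcal{T}_d$ on both sides of \eqref{rule:bTA}: the Z spiders and W nodes are sent to their qudit counterparts, while each bosonic split (resp.\ merge) becomes an X spider flanked by the projector of \eqref{eq:d-projector} and a pair of Z boxes with entries $\overrightarrow{N!}^{\pm 1/2}$ encoding the binomial factors, per \eqref{eq:split_def} and \eqref{eq:merge_def}. Then push the projector $\tilde P_n$ up into each diagram: using \eqref{rule:PZ}, \eqref{rule:PW}, and \eqref{rule:PS} one can migrate projectors across Z spiders, W nodes, and split/merge nodes (since each of these is particle-preserving up to the factorial dressings), and fuse duplicates via \eqref{rule:PP}. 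With projectors placed on every internal wire, the X spider additions inside the truncated split/merge nodes agree with ordinary addition in $\N$, so the diagrams differ from the qudit ZXW trialgebra instance only by the arrangement of the binomial Z boxes.

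Next, apply the qudit ZXW trialgebra law to the X-spider/Z-spider/W-node subdiagram (the law that \eqref{rule:bTA} is the bosonic analogue of). What remains on both sides is a matter of accounting for the factorial Z boxes: use \eqref{rule:iS1} to fuse the Z spiders that come from consecutive split/merge layers, and verify that the resulting diagonal amplitudes on each wire agree. Concretely, on a basis state with photon totals $(k_1,\ldots,k_r)$ routed through the trialgebra, both sides produce the same overall combinatorial factor $\sqrt{\binom{\text{total}}{k_1,\ldots,k_r}}$ by the multinomial identity, which follows from the elementary identity $\binom{a+b}{c}\binom{c}{d}=\binom{a+b}{d}\binom{a+b-d}{c-d}$ iterated.

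The main obstacle will be the bookkeeping: tracking how the binomial-coefficient Z boxes distribute through the trialgebra rewrite, ensuring each internal wire carries a projector so that X-spider addition agrees with $\N$-addition before applying the qudit trialgebra, and choosing $d^\ast$ large enough (proportional to $n$ times the arity of the W node, as in the proof of \cref{lemma:lifting}) so that no truncation effects occur under $\tilde P_n$. Once these bookkeeping steps are in place, the result follows from the qudit ZXW trialgebra and \eqref{rule:iS1}, \eqref{rule:PP}--\eqref{rule:PS}, lifted to $\bf{ZW}_\infty$ via \cref{th:lifting}.
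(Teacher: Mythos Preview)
Your proposal is correct and follows essentially the same route as the paper: truncate to $\bf{ZXW}_d$, expand split/merge via \eqref{eq:split_def}--\eqref{eq:merge_def}, invoke the qudit trialgebra \eqref{rule:TA} on the underlying X/Z/W subdiagram, manage the projectors with \eqref{rule:PP}--\eqref{rule:PS}, and then apply \cref{th:lifting}. The paper compresses all of this into a short chain of diagrammatic rewrites in $\bf{ZXW}_d$ and observes that precomposing with $\tilde P_d$ yields $\mathcal{T}_d(D)\tilde P_d = \mathcal{T}_d(D')\tilde P_d$ for every $d$, which is exactly the hypothesis of the lifting theorem; your version spells out the projector-pushing and the binomial bookkeeping more explicitly, but the underlying argument is the same.
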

\begin{proof}
  First, in $\bf{ZXW}_d$ the following holds:
  \begin{align*}
    &\tikzfig{infinite-zw/trialgebra-proof-0} \quad
    \tikzfig{infinite-zw/trialgebra-proof-1} \\
    &\tikzfig{infinite-zw/trialgebra-proof-2}
  \end{align*}
  Precomposing the equation above by projectors gives an equation of the form $\mathcal{T}_d(D)\tilde{P}_d = \mathcal{T}_d(D')\tilde{P}_d$ which holds for any $d$.
  Therefore, by \cref{th:lifting} we deduce that $\interp{D} = \interp{D'}$.
\end{proof}

We may apply the same reasoning to the remaining axioms of $\bf{ZW}_\infty$.

\begin{theorem}[Soundness]
  The axioms of the infinite ZW calculus are sound for the infinite interpretation
  $\interp{\cdot}: \bf{ZW}_\infty \to \bf{Vect}_\mathbb{N}$.
\end{theorem}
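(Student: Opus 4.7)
The plan is to apply the lifting theorem (\cref{th:lifting}) axiom-by-axiom, following the template of \cref{prop:trialgebra}. For each axiom $D = D'$, it suffices to produce, for every $n \in \N$, a dimension $d^\ast(n)$ such that $\mathcal{T}_d(D)\tilde{P}_n = \mathcal{T}_d(D')\tilde{P}_n$ holds in $\bf{ZXW}_d$ whenever $d > d^\ast(n)$; soundness in $\bf{Vect}_\N$ then follows automatically from the biconditional in \cref{th:lifting}.

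First I would dispose of the rules inherited directly from qudit ZXW, namely \eqref{rule:iS1}, \eqref{rule:iBZW}, \eqref{rule:iPcy}, \eqref{rule:iAD}, \eqref{rule:iSym}, \eqref{rule:iAso}, \eqref{rule:iWW}, \eqref{rule:iK0}, \eqref{rule:iW1}, and \eqref{rule:iBsj}. For each of these, $\mathcal{T}_d$ sends each generator to its qudit counterpart, and the corresponding equation is either already an axiom of $\bf{ZXW}_d$ or an easy consequence derivable from the rules in \cref{subsec:rules-zxw}. The untruncated equation therefore holds in $\bf{ZXW}_d$ for every $d$, and precomposing with $\tilde{P}_n$ preserves it. Taking $d^\ast = n$ (or a small multiple thereof when a Z spider with several output wires appears, as in \cref{lemma:lifting}) then completes the argument for this block.

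Next, I would tackle the pure QPath rules \eqref{rule:bSym}, \eqref{rule:bAso}, and \eqref{rule:bId}. Each is verified by unfolding the truncated definitions \eqref{eq:split_def} and \eqref{eq:merge_def} of split and merge as $X$ spiders dressed with the two-mode projector and Z boxes carrying the binomial coefficients, and then applying symmetry, associativity, and the spider unit law from the qudit calculus. The genuine obstacle is the bialgebra \eqref{rule:bBA}, which in $\bf{ZXW}_d$ holds only up to the $d$-particle projector, as made explicit in \cref{sec:photonics}. To handle it, I would precompose both sides of the lifted equation by $\tilde{P}_n$ for $n < d$, use \eqref{rule:PS} to propagate the projector past the split map, apply the truncated bialgebra identity, and then reabsorb the residual projectors via \eqref{rule:PP} and \eqref{rule:PZ}. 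Choosing $d^\ast = 2n$ ensures that no intermediate wire can carry more particles than the truncation threshold during this manoeuvre.

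Finally, the remaining interaction rules \eqref{rule:bZBA} and \eqref{rule:bW1} (together with \eqref{rule:bTA}, already dispatched by \cref{prop:trialgebra}) follow the same two-step recipe: unfold the truncated bosonic nodes, commute $\tilde{P}_n$ across the diagram using the projector-preservation rewrites \eqref{rule:PZ}, \eqref{rule:PW}, \eqref{rule:PS}, and \eqref{rule:PK}, and then apply standard qudit ZXW rewrites such as the $W$-bialgebra and Z-spider fusion to match the two sides. The subtle bookkeeping task throughout is verifying that the binomial coefficients carried by the Z boxes combine correctly under the projector on either side of each rewrite; this reduces to arithmetic identities for binomials already encoded in Z-box fusion, but it is the step most likely to hide edge cases at the interface between the infinite Fock interpretation and its qudit truncation, and so is the main point where care is needed in writing up the full proof.
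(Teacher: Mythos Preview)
Your plan matches the paper's in outline: invoke the lifting theorem axiom by axiom, with the ZXW-inherited rules and most QPath rules going through easily via truncation plus projector rewrites. The divergence is in how the hardest cases are handled. For \eqref{rule:bBA}, \eqref{rule:bW1}, and \eqref{rule:iW1} the paper does \emph{not} attempt to push the argument through $\bf{ZXW}_d$ rewriting; it instead verifies these directly in the semantics, computing the interpretation on basis elements and, for \eqref{rule:bBA}, identifying the required coefficient identity as Vandermonde's identity (citing Hadzihasanovic's thesis). Your proposal keeps everything inside the lifting framework, appealing to the truncated bialgebra identity displayed in \cref{sec:photonics} and propagating projectors. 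That is legitimate in principle, since completeness of $\bf{ZXW}_d$ guarantees the truncated identity is derivable, but your claim that the binomial bookkeeping ``reduces to arithmetic identities for binomials already encoded in Z-box fusion'' understates what is actually required: Z-box fusion \eqref{rule:S1} only yields pointwise products of phase vectors, whereas matching the two sides of \eqref{rule:bBA} needs $\sum_j \binom{a}{j}\binom{b}{k-j} = \binom{a+b}{k}$, which is not a one-line spider rewrite. The paper's semantic shortcut avoids this; your route would either have to invoke completeness of $\bf{ZXW}_d$ as a black box at that point, or actually exhibit the diagrammatic Vandermonde derivation. Also note that you placed \eqref{rule:iW1} in the ``trivially lifted'' block, but the paper treats it as one of the rules checked by direct interpretation rather than as an immediate ZXW consequence.
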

\begin{proof}
  The axioms for Z spiders and W nodes, as well as~\eqref{rule:iBsj} and~\eqref{rule:iK0}, are trivially lifted from ZXW since
  they all lie in the image of the truncation functor and
  $\mathcal{T}_d(D) = \mathcal{T}_d(D') \, \forall d \in \mathbb{N} \implies \interp{D} = \interp{D'}$.
  The rules from QPath can be proved with few ZXW rewrites by using the properties of projectors.
  The bialgebra axiom for bosonic nodes~\eqref{rule:bBA} is the hardest one to prove diagrammatically.
  The proof, obtained by checking on the basis elements, boils down to an application
  of the Vandermonde identity, as shown explicitly in~\cite[Section 5.3]{hadzihasanovicAlgebraEntanglementGeometry2017}.
  Proving Rule~\eqref{rule:bW1} diagrammatically requires the definition of
  projector~\eqref{eq:d-projector}; however, it is easier to verify by computing the interpretations. Similarly, Rule~\eqref{rule:iW1} is proved by considering the
  interpretation.
  Rule~\eqref{rule:bTA} is proved in \cref{prop:trialgebra}.
  Rule~\eqref{rule:bZBA} is proved similarly following from the fact that bialgebra between Z and X holds in the truncation.
\end{proof}

Even though dualisers and hadamard nodes are not valid maps in the
infinite calculus, we may still use them when rewriting in the truncation,
as long as they disappear in the resulting diagram. An example is given by \cref{lem:razins},
which is used to prove the commutation relations for bosonic and fermionic operators in the next section.

\endgroup

\section{Rewriting Hamiltonians}\label{sec:hamiltonians}

In this section, we use rewriting in ZXW to prove facts about quantum optical Hamiltonians.
These are defined as sums and products of creation and annihilation operators, and may
always be represented as single ZXW diagrams. We give a range of example applications,
from linear optics and non-linear Kerr media to light-matter interaction.
We provide the proof of some propositions and lemmas of this section in the appendix.

\subsection{Sums and products of \texorpdfstring{$\bf{ZW}_\infty$}{ZW∞} diagrams}
Controlled diagrams in $\bf{ZW}_\infty$ are defined as follows.
\begin{definition}[Controlled diagram]
  For a diagram $D$, a controlled diagram $\widetilde{D}$ is
  \begin{equation*}
    \scalebox{1}{\tikzfig{hamiltonians/ControlledDiagramDefinition0}}
    \qquad
    \text{such that}
    \qquad
    \scalebox{1}{\tikzfig{hamiltonians/ControlledDiagramDefinition1}}
    \qquad
    \text{and}
    \qquad
    \scalebox{1}{\tikzfig{hamiltonians/ControlledDiagramDefinition2}}
  \end{equation*}
\end{definition}
As an example, a controlled diagram of identity is given by:
$\quad \scalebox{1}{\tikzfig{hamiltonians/IdentityControlledDiagram}}$
\begin{proposition}[Controlled sum of diagrams]
  \label{SumControlledDiagram}
  Given controlled diagrams $\widetilde{D_1}, \ldots, \widetilde{D_k}$ corresponding to diagrams $D_1, \ldots, D_k$ and complex numbers $c_1, \ldots, c_k$, a controlled diagram for $\sum_i c_i D_i$ is given by
  \begin{equation}
    \scalebox{1}{\tikzfig{hamiltonians/ControlledDiagramSum}}
  \end{equation}
  where $\underline{c_i} = (c_i, 0, \dots, 0)$.
\end{proposition}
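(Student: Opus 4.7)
The plan is to verify the two defining conditions of a controlled diagram directly against the displayed construction. I read the right-hand side as a $1$-to-$k$ W node on the input control wire, followed by the $k$ controlled diagrams $\widetilde{D_i}$ arranged in parallel (each fed by one output of the W), with a Z-state of phase $\underline{c_i}$ on the $i$-th control branch, and finally a $k$-to-$1$ W dagger merging the $k$ control wires back into one; the target wires of the $\widetilde{D_i}$'s share a common underlying system.

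For the control-$|0\rangle$ condition, I would use the W-node identities in $\bf{ZW}_\infty$, which imply $W^{(k)}|0\rangle = |0\cdots 0\rangle$; this follows from W associativity \eqref{rule:iAso} iterated on top of the two-to-one W-W algebra rule \eqref{rule:iWW}. Each of the $k$ branches then carries $|0\rangle$ on its control wire, so by the first defining property of a controlled diagram each $\widetilde{D_i}$ reduces to $|0\rangle$ tensored with identity on the target. The Z-state $\underline{c_i} = (c_i, 0, \dots)$ applied to $|0\rangle$ returns $|0\rangle$ with trivial scalar (the $a_0$ slot of any Z-spider is fixed to $1$ by \eqref{eq:inf-z-spider}). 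Finally, $(W^{(k)})^\dagger$ collapses $|0\cdots 0\rangle$ back to $|0\rangle$, leaving identity on the target, which is the first condition.

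For the control-$|1\rangle$ condition, I would expand $W^{(k)}|1\rangle = \sum_{j=1}^{k} |e_j\rangle$, where $|e_j\rangle$ has a $1$ in position $j$ and $0$s elsewhere, again obtained by iterating \eqref{rule:iWW}. In the $j$-th summand only the $j$-th branch sees a $|1\rangle$, so by the second defining property $\widetilde{D_j}$ applies $D_j$ on the target, while the other $k-1$ branches see $|0\rangle$ and reduce to identities. The Z-state $\underline{c_j}$ on $|1\rangle$ contributes exactly the scalar $c_j$, and the other Z-states contribute $1$. Finally, $(W^{(k)})^\dagger|e_j\rangle = |1\rangle$. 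Summing over $j$ yields $|1\rangle \otimes \sum_{j} c_j D_j$, which is the second condition.

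The main obstacle will be bookkeeping the shared target, since several $\widetilde{D_i}$'s act on a common system: one needs to know that at most one $\widetilde{D_j}$ is ever non-trivial per summand, which is precisely what the sparsity of $W^{(k)}|0\rangle$ and $W^{(k)}|1\rangle$ on the computational basis delivers. If one prefers to avoid this global argument, a cleaner alternative is induction on $k$. The base case $k=1$ is immediate (a single $\widetilde{D_1}$ rescaled by a Z-state $\underline{c_1}$ is a controlled diagram for $c_1 D_1$, again because the Z-state sends $|0\rangle \mapsto |0\rangle$ and $|1\rangle \mapsto c_1 |1\rangle$), and the inductive step uses \eqref{rule:iAso} to split $W^{(k)}$ into $W^{(k-1)}$ composed with a binary $W^{(2)}$, so only the binary controlled-sum case needs to be checked by the verification outlined above.
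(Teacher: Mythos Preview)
Your approach is exactly the one the paper takes: it simply remarks that the proposition ``can be verified by plugging in $\ket{0}$ and $\ket{1}$'', which is precisely the two-case check you spell out. One small point: in the construction there is no $W^\dagger$ merging control wires back into one (a controlled diagram has only a control input, no control output), so that part of your reading is superfluous---but dropping it only simplifies your argument, and everything else goes through as you describe.
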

\begin{proposition}[Controlled product of diagrams]
  \label{ProductControlledDiagram}
  Given controlled diagrams $\widetilde{D_1}, \ldots, \widetilde{D_k}$ corresponding to the diagrams $D_1, \ldots, D_k$, a controlled diagram for $\prod_i D_i$ is given by
  \begin{equation}
    \scalebox{1}{\tikzfig{hamiltonians/ControlledDiagramProduct}}
  \end{equation}
\end{proposition}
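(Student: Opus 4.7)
The plan is to verify the two defining conditions in the controlled-diagram definition for the composite $\prod_i D_i$. I read the proposed construction as cascading the $\widetilde{D_i}$ sequentially on the data wires while sharing a single control wire that is distributed to each $\widetilde{D_i}$ via a Z spider (one input, $k$ outputs). So the whole verification reduces to checking what happens when the two distinguished control states specified in the definition are plugged into that Z spider.

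First, I would plug the \enquote{off} control state (the one for which $\widetilde{D}$ acts as identity on its data wires) into the shared control input. By the standard interpretation of the Z spider in~\eqref{eq:inf-z-spider}, a Z-basis input is copied onto each of its $k$ output legs, so each $\widetilde{D_i}$ receives the \enquote{off} state. By its defining property, each $\widetilde{D_i}$ then becomes the identity on its data wires, and the cascade of identities is the identity—fulfilling the first controlled-diagram condition.

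Next, I would plug the \enquote{on} control state and repeat the argument. The Z spider copies this state to each $\widetilde{D_i}$, which by its second defining property executes $D_i$ on the data wires. Composing in the order dictated by the cascade produces $\prod_i D_i$, fulfilling the second condition. Together these two verifications give exactly the defining properties of a controlled diagram for $\prod_i D_i$.

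The main obstacle I anticipate is making the control-copying step fully rigorous: one needs the distinguished control states to be Z-basis states so that the Z spider really does act as a perfect copy. The definition of controlled diagram is framed precisely to make this the case, but if a reader interprets the definition more liberally one would need to appeal to phase copy~\eqref{rule:iPcy} and spider fusion~\eqref{rule:iS1} to justify the distribution. Importantly, no recourse to the lifting theorem is needed here, since every step stays inside $\bf{ZW}_\infty$ and only uses the Z-spider rules and the defining equations of the $\widetilde{D_i}$.
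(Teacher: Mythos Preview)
Your proposal is correct and matches the paper's approach exactly: the paper's entire proof is the sentence ``The above propositions can be verified by plugging in $\ket{0}$ and $\ket{1}$,'' and you have simply spelled out those two checks in detail, including the observation that the Z spider copies the computational-basis control states to each $\widetilde{D_i}$.
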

The above propositions can be verified by plugging in $\ket{0}$ and $\ket{1}$.
This allows us to write sums and products of diagrams by creating controlled sums and products,
and then plugging $\ket{1}$ on the top.

\subsection{Creation and annihilation operators}
Hamiltonians are usually written in terms of sums and products of creation/annihilation operators.
In order to obtain them as diagrams in ZXW, we need to build the controlled operators.
Controlled diagrams for bosonic creation ($a^{\dagger}$) and annihilation ($a$) operators are the following:
\begin{equation*}
  \scalebox{1}{\tikzfig{hamiltonians/BosonControlledCreation1}}
  \qquad
  \text{ and }
  \qquad
  \scalebox{1}{\tikzfig{hamiltonians/BosonControlledAnnihilation1}}
\end{equation*}
One may check that plugging $\ket 0$ in the control gives the identity in both cases.
We can now give a diagrammatic proof of the commutation relations by rewriting in ZXW.
\begin{restatable}{proposition}{bosonCommutation}
  In $\bf{ZW}_\infty$, $aa^{\dagger} = a^{\dagger}a + id$, that is:
  \begin{equation}
    \label{eq:boson_commutation}
    \scalebox{1}{\tikzfig{hamiltonians/BosonCommutationRelation}}
  \end{equation}
\end{restatable}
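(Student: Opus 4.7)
The plan is to prove this by the lifting theorem: it suffices to show that the two diagrams become equal in $\bf{ZXW}_d$ after precomposition with the projector $\tilde{P}_n$ on the $n$-particle sector, for any $n$ and all sufficiently large $d$. Since both sides act on a single bosonic mode, the relevant projector constraint is mild, and the lifting theorem will then promote the truncated equality to an equality in $\bf{Vect}_\mathbb{N}$.

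First, I would unpack the two controlled diagrams. The controlled creation $\widetilde{a^\dagger}$ on the right of~\eqref{eq:boson_commutation} is built from a merge with a single-photon state activated by the control wire, and the controlled annihilation $\widetilde{a}$ is its dagger. Composing $\widetilde{a}$ after $\widetilde{a^\dagger}$ on the left-hand side therefore produces a configuration where a merge is immediately followed by a split on the bosonic wire, with the auxiliary particle wires attached on both sides. By Proposition~\ref{SumControlledDiagram}, the right-hand side is represented as a controlled sum: the control goes through a W node, branching into $\widetilde{a^\dagger}\circ\widetilde{a}$ (with scalar $1$) and the identity diagram (with scalar $1$), and the outputs are W-merged back.

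The core rewrite step is to invoke the bialgebra rule~\eqref{rule:bBA} between split and merge on the bosonic wire of the left-hand side. This replaces the merge--split configuration by a "crossed" split--merge through two intermediate wires, and the binomial coefficients on split and merge guarantee the correct combinatorics. Attaching the single-photon particle state and particle effect (creation/annihilation) on these crossed wires yields two contributions once one decomposes the sum: a branch in which the created photon is immediately annihilated (giving identity on the mode, times a scalar $1$ coming from $\binom{n+1}{1}-\binom{n}{1} = 1$ balanced across the coefficients), and a branch in which the created photon passes through while an existing photon is annihilated (giving $\widetilde{a^\dagger}\circ\widetilde{a}$). The ZXW rules~\eqref{rule:iBZW},~\eqref{rule:iWW},~\eqref{rule:iSym}, together with the particle/state rules~\eqref{rule:iK0} and~\eqref{rule:iBsj}, let me reshape this into the W-branching shape on the right.

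The main obstacle is bookkeeping the binomial prefactors: the truncated split and merge carry $\binom{\cdot}{\cdot}^{1/2}$ coefficients in Z-box form, and I must verify that the two branches produced by~\eqref{rule:bBA} recombine with exactly the coefficients demanded by Proposition~\ref{SumControlledDiagram} (scalars $\underline{1}$ on both summands). I would handle this by pushing the binomial Z-boxes through the X spiders via phase-copy~\eqref{rule:iPcy} and addition~\eqref{rule:iAD}, using~\eqref{rule:bZBA} to trade the bosonic bialgebra coefficients against those inside the W-controlled sum. Projector compatibility rules~\eqref{rule:PP},~\eqref{rule:PS},~\eqref{rule:PW},~\eqref{rule:PK} ensure that $\tilde{P}_n$ can be pushed through the entire rewrite, validating the hypothesis of Theorem~\ref{th:lifting} and closing the proof.
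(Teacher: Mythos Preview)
Your route diverges from the paper's. The paper does not invoke the split--merge bialgebra~\eqref{rule:bBA} at all; instead it rewrites directly in $\bf{ZW}_\infty$ and applies the dedicated \cref{lem:razins} (itself established by lifting from $\bf{ZXW}_d$), which packages the passage from the merge/split-with-single-particle configuration to a W-branching form. With that lemma available, the commutation relation is obtained in two short $\bf{ZW}_\infty$ rewrite steps, with no projector bookkeeping appearing in the main argument.

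Your bialgebra idea is semantically sound, but the execution has two problems. First, the coefficient story is wrong: after~\eqref{rule:bBA} and plugging $\ket{1}$, $\bra{1}$ on the auxiliary wires, the two branches carry coefficients $1$ and $n$ directly (from the splitting $\ket{1}\mapsto\ket{0}\ket{1}+\ket{1}\ket{0}$ and the corresponding merge), and these \emph{sum} to $n+1=\bra{n}aa^\dagger\ket{n}$. No difference $\binom{n+1}{1}-\binom{n}{1}$ appears anywhere in the diagrammatic decomposition; you are conflating the algebraic identity $(n+1)-n=1$ with the branch structure of the rewrite. Second, and more seriously, you have not argued the step that actually matters: reshaping the post-bialgebra configuration---bosonic splits and merges attached to particle states---into the W-controlled-sum form required by \cref{SumControlledDiagram}. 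Citing~\eqref{rule:iBZW}, \eqref{rule:iWW}, \eqref{rule:iSym}, \eqref{rule:iK0}, \eqref{rule:iBsj} as a list does not accomplish this; turning a bosonic split of $\ket{1}$ into a W-branching and a bosonic merge against $\bra{1}$ into a W-merge is precisely the non-trivial content that the paper isolates in \cref{lem:razins}. Finally, since~\eqref{rule:bBA} is already an axiom of $\bf{ZW}_\infty$, routing the core step through truncation and the projector rules~\eqref{rule:PP}--\eqref{rule:PK} is an unnecessary detour.
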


\subsection{Linear optics}
The circuits in linear optics are built using two main gates: (1) phase shift and (2) beam splitter.
These are often defined by their Hamiltonians.
The Hamiltonian of the phase shift is given by the number operator $H_{P} = \alpha \hat{n_1} = \alpha a^{\dagger}_1 a_1$.
The phase shift gate is the exponential of this Hamiltonian, i.e.\@ $e^{i H_P}$.
\begin{restatable}{proposition}{phaseShiftExponential}
  \label{prop:phaseshiftexp}
  In $\bf{ZW}_\infty$, the phase shift gate with phase $\alpha$ is given by:
  \[
    \exp\left(\scalebox{1}{\tikzfig{hamiltonians/NumberOperator}}\right)
    \qquad = \qquad
    \tikzfig{hamiltonians/PhaseShiftExponential}
  \]
\end{restatable}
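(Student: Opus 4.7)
The plan is to exploit the fact that the number operator $\hat n = a^\dagger a$ is diagonal in the occupation-number basis with eigenvalue $n$ on $\ket n$, so that $e^{i\alpha\,\hat n}\ket n = e^{i\alpha n}\ket n$, which matches the action of the phase shift $Z(e^{i\alpha})$ defined earlier as the QPath endomorphism. Thus the identity we want is essentially the statement that the diagrammatic exponential agrees with this well-known diagonal action, packaged in $\bf{ZW}_\infty$.

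The first step is to massage the composition $a^\dagger a$ obtained by stacking the controlled diagrams for $a^\dagger$ and $a$ (via \cref{ProductControlledDiagram}) into a canonical ``diagonal'' shape. Using the split--merge bialgebra \eqref{rule:bBA} together with \eqref{rule:iBsj}, \eqref{rule:iK0} and the interaction of Z boxes with bosonic nodes \eqref{rule:bZBA}, the creation--annihilation pair should telescope so that the remaining binomial factors collect into a single Z spider on the controlled wire whose diagonal entries are proportional to $n$. Scaling by $\alpha$ gives a clean Z-type representation of $H_P$.

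Once $H_P$ has this diagonal form, I would apply the exponentiation-of-Hamiltonians technique of~\cite{shaikhHowSumExponentiate2022}: the exponential of a Z spider loaded with diagonal values is computed by applying $e^{(\cdot)}$ entrywise. This produces a Z box with parameters $(e^{i\alpha},e^{2i\alpha},\ldots,e^{ki\alpha},\ldots)$, which is exactly the phase shift on the right-hand side, by the very definition of the QPath endomorphism as a Z box with $r^{\overrightarrow N}$.

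The main obstacle is that both the diagonalisation step and the exponentiation procedure of~\cite{shaikhHowSumExponentiate2022} are native to finite dimensions. My strategy is therefore to carry out the entire chain of rewrites inside $\bf{ZXW}_d$ composed with the projector $\tilde P_n$ (so that truncated versions of split/merge and all relevant bialgebra manipulations are valid), and then invoke the lifting theorem \cref{th:lifting} to promote the equality to $\bf{ZW}_\infty$. A secondary subtlety is that Z boxes are normalised so that $a_0=1$, while $\hat n \ket 0=0$; this mismatch is harmless for the exponential, since $e^{0}=1$, but one must be careful at the intermediate stage --- either by keeping the $n=0$ sector separate using \eqref{rule:iW1}, or by phrasing the diagonal form of $H_P$ after a prior subtraction of the identity component implicit in the controlled construction.
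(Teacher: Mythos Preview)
Your proposal is essentially the paper's own proof: truncate to $\bf{ZXW}_d$, diagonalise the number-operator diagram there (this is exactly \cref{lem:phaseHamiltonian}), exponentiate the resulting Z box entrywise (this is \cref{lem:phaseShiftExponential}), and then invoke \cref{th:lifting} to lift the equality back to $\bf{ZW}_\infty$. The only minor wobble is that you cite the $\bf{ZW}_\infty$ axioms \eqref{rule:bBA}, \eqref{rule:iBsj}, \eqref{rule:iK0}, \eqref{rule:bZBA} while simultaneously saying you will work in $\bf{ZXW}_d$; in practice the diagonalisation uses the native $\bf{ZXW}_d$ rules (fusion, \eqref{rule:K0}, \eqref{rule:Bsj}, the X-spider bialgebra, etc.) on the truncated split/merge, but this is a cosmetic rather than a substantive discrepancy.
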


The Hamiltonian of the beam splitter is given as
$H_{BS} = \theta\left(e^{i\phi} a_1 a^{\dagger}_2 + e^{-i\phi} a^{\dagger}_1 a_2 \right)$
for some parameters $\phi$ and $\theta$. The exponential is given by:
\begin{equation*}
  \exp\left(\tikzfig{hamiltonians/examples/BeamSplitter}\right)
  \qquad = \qquad
  \scalebox{1}{\tikzfig{hamiltonians/examples/BeamSplitterExponential}}
\end{equation*}
where $r=e^{i\phi}\sin \theta$ is the reflectivity, $t=\cos \theta$ is the transmissivity, and $\null^{\ast}$ is complex conjugation.

\subsection{Non-linear optics with Kerr media}

Kerr media are non-linear optical crystals which allow performing entangling operations
between bosonic modes, with applications to photonic quantum computing~\cite{azumaQuantumComputationKerrnonlinear2007}.
Here we study the single mode Kerr effect, described by a phase shift with
a quadratic term, and the cross-Kerr interaction.
We give a representation of the latter as a \emph{phase gadget} which allows
for simple rewrite rules that remove non-linearities from circuits.

The Kerr interaction is given by the Hamiltonian
$H_K = \kappa \hat{n_1}^2 = \kappa a^{\dagger}_1 a_1 a^{\dagger}_1 a_1$.
We can represent this Hamiltonian in ZXW and diagonalise it to get:
\begin{restatable}{proposition}{kerrExample}
  In $\bf{ZW}_\infty$, the Kerr gate with parameter $\kappa$ is given by:
  \[
    \exp\left(\tikzfig{hamiltonians/KerrLHS}\right)
    \qquad = \qquad
    \tikzfig{hamiltonians/examples/KerrExponential}
  \]
\end{restatable}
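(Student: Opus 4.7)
The Kerr Hamiltonian $H_K = \kappa\, a_1^{\dagger} a_1 a_1^{\dagger} a_1 = \kappa\, \hat{n}^2$ is diagonal in the number basis, with eigenvalue $\kappa n^2$ on $\ket{n}$, so its exponential is diagonal with eigenvalue $e^{\kappa n^2}$. The plan is therefore to (i) recognise $H_K$ as a single Z spider endomorphism carrying the phase vector $(\kappa n^2)_{n \geq 1}$, and (ii) use the already-established diagrammatic exponentiation of diagonal Z spiders of~\cite{shaikhHowSumExponentiate2022} to obtain a Z spider with phase vector $(e^{\kappa n^2})_{n \geq 1}$, which is the RHS of the claim.

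First I would unfold $a^{\dagger} a$ from the controlled creation/annihilation building blocks of the previous subsection, using the boson commutation proposition and the bialgebra-with-coefficients rule \eqref{rule:bZBA} to collapse the $a^{\dagger} a$ sandwich into a single Z box. The quickest argument is to appeal to \cref{th:lifting}: the truncation $\mathcal{T}_d(a^{\dagger}a)$ is a Z spider endomorphism whose interpretation sends $\ket{n}$ to $n\ket{n}$ for $n < d$, and this matches the interpretation \eqref{eq:inf-z-spider} of the Z spider with phase vector $\overrightarrow{N} = (1, 2, 3, \ldots)$ up to the relevant $d$-photon projector, so the two diagrams agree in $\bf{ZW}_\infty$.

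Next I would compose two such number-operator Z spiders along the single wire using Z-spider fusion \eqref{rule:iS1} and the phase-copy rule \eqref{rule:iPcy}, which amounts to taking the elementwise product of phase vectors and yields the Z spider with phase vector $(n^2)_{n\geq 1}$. Scaling by $\kappa$ and applying the ZW-level exponentiation lemma for diagonal Z spiders (the infinite-dimensional analogue of \cref{prop:phaseshiftexp}, which is itself a direct entrywise application of the diagonal exponentiation of~\cite{shaikhHowSumExponentiate2022}) gives the RHS. As with the phase shift, the argument on any fixed number-state $\ket{n}$ reduces to a finite sum, so the series converges pointwise on every finite-support state in the Fock space.

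The main obstacle is making the exponentiation step fully diagrammatic rather than semantic: $\exp(H_K)$ is an infinite power series and the bialgebra \eqref{rule:bBA} only holds up to the truncation projector, so one cannot naively manipulate $\sum_k H_K^k / k!$ as a $\bf{ZW}_\infty$ diagram. The clean route is to work inside the truncation $\mathcal{T}_d$, where the exponential of the diagonal Z spider with phases $(\kappa n^2)_{n=1}^{d-1}$ is the Z spider with phases $(e^{\kappa n^2})_{n=1}^{d-1}$ by a finite ZXW calculation, and then invoke \cref{th:lifting} with the projector $P_N$ on the $(d-N)$-particle sector: for any $N$, choosing $d^{\ast} = N$ suffices since $H_K$ preserves the photon number, so the truncated and infinite exponentials agree on $P_N$ and the equality lifts to $\bf{ZW}_\infty$.
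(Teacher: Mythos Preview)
Your proposal is correct and follows essentially the same route as the paper: diagonalise the Kerr Hamiltonian in the truncation $\mathcal{T}_d$ to a single Z spider carrying the phase vector $(\kappa n^2)_n$, exponentiate it entrywise exactly as was done for the phase shift in \cref{lem:phaseShiftExponential}, and lift via \cref{th:lifting}. The paper carries out the diagonalisation by direct diagrammatic rewriting in $\bf{ZXW}_d$ rather than by your semantic-check-plus-lifting shortcut, and the rule \eqref{rule:iPcy} you cite is not actually needed (spider fusion \eqref{rule:iS1} alone gives the elementwise product of the two number-operator phase vectors), but these are minor differences in presentation, not in substance.
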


The cross-Kerr interaction is given by the Hamiltonian
$H_{CK} = \tau \hat{n_1}\hat{n_2} = \tau a^{\dagger}_1 a_1 a^{\dagger}_2 a_2$,
from which we can derive the cross-Kerr gate $CK(\tau) = \exp(i \tau \hat{n}_1 \hat{n}_2)$.
The cross-Kerr gate with $\tau = \pi$ may be used to construct the CZ gate on
dual-rail qubits~\cite{azumaQuantumComputationKerrnonlinear2007}.
We find that it has a natural representation in the calculus.
\begin{restatable}{proposition}{crossKerrExample}
  The cross-Kerr gate with parameter $\kappa$ is given by:
  \begin{equation*}
    \exp\left(\tikzfig{hamiltonians/CrossKerrLHS}\right)
    \qquad = \qquad
    \tikzfig{hamiltonians/CrossKerrExp}
  \end{equation*}
\end{restatable}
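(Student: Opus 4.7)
The plan is to mirror the strategy used for the phase shift and (single-mode) Kerr propositions, but exploit the product structure from Proposition~\ref{ProductControlledDiagram} rather than the square structure. Since $\hat{n}_1$ and $\hat{n}_2$ act on distinct modes they commute, so $\exp(i\tau \hat{n}_1 \hat{n}_2)$ is simply the diagonal operator $\ket{m,n}\mapsto e^{i\tau mn}\ket{m,n}$. The target RHS is therefore the (unique) $\bf{ZW}_\infty$ diagram whose interpretation implements this phase pointwise.

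First I would build the controlled number operator $\widetilde{\hat{n}_j}$ as the diagram already implicit in Proposition~\ref{prop:phaseshiftexp}: a Z box with phase vector $\vec{N}=(1,2,\dots)$ on mode $j$, threaded through the control wire. Then, applying the controlled-product construction of Proposition~\ref{ProductControlledDiagram} to $\widetilde{\hat{n}_1}$ and $\widetilde{\hat{n}_2}$ yields a controlled diagram for $\tau\hat{n}_1\hat{n}_2$, from which I can plug $\ket{1}$ into the control to obtain the ZXW representation of the Hamiltonian. Exponentiating this controlled diagram using the techniques of~\cite{shaikhHowSumExponentiate2022}, as was done for $H_K$ in the previous proposition, produces a candidate diagram for the cross-Kerr gate.

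Next I would invoke the lifting theorem (Theorem~\ref{th:lifting}). It suffices to show, for every $n$, the existence of a $d^\ast$ such that in $\bf{ZXW}_d$ for $d>d^\ast$ the truncated exponential agrees with the truncated RHS, modulo the projector $\tilde P_n$. Because both sides are diagonal with respect to the number basis, the verification on the truncation reduces to checking, on each basis vector $\ket{m,n}$ with $m,n<d$, that the phase accumulated by the RHS is $e^{i\tau mn}$. This in turn follows from the defining interpretations of the Z spider and Z box together with rule~\eqref{rule:iPcy} for phase copying, which is what converts the product $m\cdot n$ into a single diagonal phase.

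The main obstacle will be the second step: finding the compact normal form for the exponential of the controlled product. The Kerr case succeeded because $\hat n^2$ can be absorbed into a single Z box via the diagonal-squaring trick. For the cross-Kerr the two Z boxes live on \emph{different} wires, so the exponential initially appears as a ``phase gadget'' that connects both modes through a W or Z structure; one must then apply~\eqref{rule:iPcy} and the Frobenius axiom~\eqref{rule:iS1} carefully to reshape it into the symmetric form claimed on the RHS. Once the gadget is in that canonical shape, the lifting argument becomes routine, and the remaining arithmetic identity $e^{i\tau mn}=\prod_k e^{i\tau m \cdot \mathbb{1}[k\le n]}$ (or its diagrammatic analogue) closes the verification.
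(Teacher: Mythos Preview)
Your outline captures the general architecture---represent the Hamiltonian, exponentiate, lift---but misses the one concrete idea that actually produces the RHS diagram. The paper does not try to massage the exponential of the controlled-product diagram into shape using \eqref{rule:iPcy} and \eqref{rule:iS1}; those rules govern Z-spider fusion and phase copying along a \emph{single} wire and cannot by themselves manufacture the bilinear phase $e^{i\tau mn}$ that couples two distinct modes. The paper's key step is the polynomial identity $2xy=(x+y)^2-x^2-y^2$, which factorises the gate as
$e^{i\tau\hat n_1\hat n_2}=e^{i\frac{\tau}{2}(\hat n_1+\hat n_2)^2}\,e^{-i\frac{\tau}{2}\hat n_1^2}\,e^{-i\frac{\tau}{2}\hat n_2^2}$.
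The two single-mode factors are handled exactly as in the Kerr proposition; the remaining factor needs the \emph{sum} $\hat n_1+\hat n_2$, which in the truncation is supplied by an X spider (addition modulo $d$) with a projector, not by a Z or W structure. This is the phase-gadget compilation of van de Wetering--Yeh~\cite{vandeweteringPhaseGadgetCompilation2022} cited in the proof, and it is what fixes the shape of the RHS.

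Your fallback identity $e^{i\tau mn}=\prod_k e^{i\tau m\cdot\mathbb{1}[k\le n]}$ is an infinite product and therefore does not correspond to any finite $\bf{ZW}_\infty$ diagram, so it cannot close the argument. And the basis-vector check you sketch would verify the equation \emph{once the RHS is already in hand}, but it does not tell you how to arrive at that diagram from the Hamiltonian, which is the substantive content here.
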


Utilizing the rewriting rules of our calculi, we can demonstrate different properties of these operators.
For instance, we can now prove the following proposition related to the composition of two cross-Kerr operators:

\begin{restatable}{proposition}{crossKerrComposition}
  The composition of two cross-Kerr media with parameters $\tau$ and $\mu$ results in a cross-Kerr interaction with parameter $\tau + \mu$.
  Specifically, in $\bf{ZXW}_d$ the following equation holds:
  \begin{equation*}
    \tikzfig{hamiltonians/examples/CrossKerrComposition}
  \end{equation*}
\end{restatable}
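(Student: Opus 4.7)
The plan is to establish this composition identity by unfolding both cross-Kerr gates into their diagrammatic forms given by the preceding proposition, and then fusing the resulting diagrams using the Z-spider and W-node rewrites of $\bf{ZXW}_d$, before lifting the equation via \cref{th:lifting}.

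First I would invoke the previous proposition twice to replace each instance of $\exp(i\tau \hat n_1 \hat n_2)$ and $\exp(i\mu \hat n_1 \hat n_2)$ by its phase-gadget representation. Stacking these two diagrams vertically along both wires puts a pair of adjacent Z-spiders on each of the two modes with a W-node / phase-gadget structure in between. The heart of the argument is then to show that the two phase gadgets (with phase-vectors encoding $e^{i\tau n_1 n_2}$ and $e^{i\mu n_1 n_2}$ respectively) combine into a single phase gadget carrying the pointwise product $e^{i(\tau+\mu) n_1 n_2}$.

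The key rewriting steps I expect to use are: (i) spider fusion~\eqref{rule:iS1} to merge adjacent Z-spiders on the same wire, which corresponds to multiplying their diagonal entries componentwise; (ii) the phase-copy rule~\eqref{rule:iPcy} and the W-symmetry/associativity rules~\eqref{rule:iSym}, \eqref{rule:iAso} to manoeuvre the inner W-node structure into the standard phase-gadget form; (iii) the addition rule~\eqref{rule:iAD} to convert the componentwise multiplication of the two phase vectors $e^{i\tau \overrightarrow{N}}$ and $e^{i\mu \overrightarrow{N}}$ into a single Z-box with phase $e^{i(\tau+\mu)\overrightarrow{N}}$. Because the cross-Kerr phase is $e^{i\tau n_1 n_2}$ with separable exponent $\tau \cdot n_1 \cdot n_2$, the multiplication of the two scalars at each $(n_1,n_2)$ cleanly factors and the resulting phase vector matches the one in the target diagram.

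The main obstacle will be bookkeeping: ensuring that when the two phase gadgets are glued, the inner W-nodes and the Z-boxes line up in the right way so that spider fusion and~\eqref{rule:iAD} can actually be applied without any residual structure. Concretely, one must check that after fusing the outer Z-spiders the remaining gadget is exactly a single cross-Kerr gadget with parameter $\tau+\mu$, rather than two separate gadgets sharing a wire. To handle the infinite-dimensional aspect cleanly, I would carry out the computation entirely inside $\bf{ZXW}_d$, where all of these rules are valid, verify that the derivation is uniform in $d$, and then apply \cref{th:lifting} with $d^\ast = n$ for each $n$, using the fact that cross-Kerr preserves photon number so no projector correction is needed beyond the one appearing in~\eqref{rule:PP}--\eqref{rule:PK}. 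This yields the desired equality in $\bf{ZW}_\infty$.
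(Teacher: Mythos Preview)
Your overall strategy---unfold both cross-Kerr gadgets into their phase-gadget form in $\bf{ZXW}_d$, then fuse the Z-boxes so that the phase vectors multiply componentwise to give parameter $\tau+\mu$---is exactly what the paper does. Two corrections, however. First, the statement is posed \emph{in $\bf{ZXW}_d$}, not in $\bf{ZW}_\infty$: the cross-Kerr diagram from the preceding proposition already lives in the truncated calculus (it carries a built-in projector), so the final appeal to \cref{th:lifting} is superfluous and in fact proves a different statement than the one asked. Second, and relatedly, you should be invoking the finite-$d$ rules \eqref{rule:S1}, \eqref{rule:AD}, \eqref{rule:Pcy}, together with the projector rules \eqref{rule:PP}--\eqref{rule:PZ} to merge the two projectors and slide them past the Z-boxes, rather than their $\bf{ZW}_\infty$ counterparts \eqref{rule:iS1}, \eqref{rule:iAD}, etc.; the projector bookkeeping is part of the core $\bf{ZXW}_d$ derivation, not a lifting afterthought. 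With those adjustments your outline coincides with the paper's proof.
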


\subsection{Towards light-matter interaction}

So far, we have studied only bosonic creation and annihilation operators.
We now introduce \emph{fermionic} creation and annihilation.
These operators are easily represented in $\bf{ZW}_\infty$ using the W node.
The W algebra was indeed already shown to have an important role in fermionic
quantum computing~\cite{ngDiagrammaticCalculusFermionic2019}.
We show the anti-commutation relation for these operators and finally,
we represent the Jaynes-Cummings Hamiltonian describing the interaction of bosons and fermions.

Controlled diagrams for the fermionic creation ($\sigma^+$) and annihilation ($\sigma^-$) operators are given by
\begin{equation*}
  \scalebox{1}{\tikzfig{hamiltonians/FermionControlledCreation2}}
  \qquad
  \text{ and }
  \qquad
  \scalebox{1}{\tikzfig{hamiltonians/FermionControlledAnnihilation2}}
\end{equation*}
Note that the fermionic creation and annihilation operators are only defined on
the qubit subspace due to the Pauli exclusion principle.
These operators acting on the input $\ket{n}$, for $n>2$, result in a zero diagram.

Now, we use the rules of the infinite ZW calculus to prove the anti-commutation relations.
\begin{restatable}{proposition}{fermionCommutation}
  In $\bf{ZW}_\infty$, $\sigma^- \sigma^+ + \sigma^+ \sigma^- = id$, that is:
  \begin{equation}
    \label{eq:fermion_commutation}
    \scalebox{1}{\tikzfig{hamiltonians/FermionCommutationRelation}}
  \end{equation}
  where the right-hand side is the identity on the qubit subspace.
\end{restatable}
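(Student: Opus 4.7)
The plan is to reduce each of the two products $\sigma^-\sigma^+$ and $\sigma^+\sigma^-$ to a simple diagram built from W nodes and then combine them into a single $\bf{ZW}_\infty$ diagram equal to the identity on the qubit subspace. First I would specialise the given controlled definitions by plugging $\ket 1$ into each control wire, obtaining explicit representations of $\sigma^+$ and $\sigma^-$ as compositions of W nodes, their daggers, and particle states/effects. The main simplification tools are the W-algebra rules \eqref{rule:iWW}, \eqref{rule:iSym}, and \eqref{rule:iAso}, together with the interaction rule \eqref{rule:iW1}, which I would use to fuse adjacent W nodes in each product and to kill off branches that would send the state outside the qubit sector.

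After these simplifications I expect $\sigma^-\sigma^+$ to collapse to the rank-one projector $\ket 0\bra 0$ and $\sigma^+\sigma^-$ to $\ket 1\bra 1$, so that their sum is literally the projector onto the qubit subspace. To form the sum diagrammatically I would invoke \cref{SumControlledDiagram} applied to the controlled versions of the two products with coefficients both equal to $1$, producing a single closed diagram. Equivalently, since both summands already vanish on all sectors with $n\ge 2$, one can assemble the sum directly by feeding the two outputs into a W node topped with the Z state $(1,1,0,0,\dots)$.

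To finish the argument cleanly, I would invoke the lifting theorem \cref{th:lifting}: both sides of \cref{eq:fermion_commutation} annihilate every sector with $n\ge 2$ particles, so it suffices to prove the truncated equation in $\bf{ZXW}_d$ after precomposition with $\tilde P_2$. In that regime the W nodes specialise to the familiar qubit W structure, and the resulting rewrite is short and independent of $d$. The main obstacle I anticipate lies in the first step: the controlled $\sigma^\pm$ diagrams contain several nested W nodes and Z states, so fusing them without leaving the fermionic sector requires chaining \eqref{rule:iWW} and \eqref{rule:iW1} in the right order, ensuring that the ``forbidden'' two-excitation branches are explicitly rewritten to zero rather than quietly discarded by implicit truncation.
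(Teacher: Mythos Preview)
Your plan is sound and the semantic picture you give is exactly right: $\sigma^-\sigma^+$ is $\ket 0\bra 0$, $\sigma^+\sigma^-$ is $\ket 1\bra 1$, and their sum is the qubit projector.  The paper's proof, however, is more economical than what you sketch: it reasons \emph{entirely} in $\bf{ZW}_\infty$ and never passes through \cref{th:lifting} or $\bf{ZXW}_d$.  One starts from the single controlled-sum diagram on the left of \cref{eq:fermion_commutation} and rewrites it directly using the W-algebra rules \eqref{rule:iWW}, \eqref{rule:iAso}, \eqref{rule:iSym} together with \eqref{rule:iW1}, \eqref{rule:iBsj}, \eqref{rule:iK0}; there is no need to split the two products apart, simplify them to rank-one projectors, and then reassemble the sum via \cref{SumControlledDiagram}.

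The trade-off is this.  Your route---separating the products, reducing each to $\ket i\bra i$, and appealing to lifting for the residual bookkeeping---is conceptually transparent and makes the fermionic content visible, but it costs an extra pass through the controlled-sum machinery and a detour into $\bf{ZXW}_d$ that is not actually needed here.  The paper's route keeps the diagram whole throughout, so the ``forbidden'' two-excitation branches you flag as the main obstacle are killed in place by the same W-rewrites that perform the fusion, and the derivation stays purely in $\bf{ZW}_\infty$; this is the point the authors want to illustrate, since elsewhere (e.g.\ the bosonic commutation relation) the lifting theorem \emph{is} necessary via \cref{lem:razins}.
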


The Jaynes-Cummings model describes the interaction between a bosonic mode
and a 2-level atom. It is defined by the following Hamiltonian, for some frequency $\omega$:
\begin{equation}
  H_{JC} = \hbar \omega \left(a_1 \sigma^+_2 + a^{\dagger}_1 \sigma^-_2 \right)
  \qquad \qquad \qquad
  \scalebox{1}{\tikzfig{hamiltonians/examples/JaynesCummings}}
\end{equation}

The Tavis-Cummings Hamiltonian is a natural generalization of Jaynes-Cummings to a setting
in which multiple atoms interact with a bosonic mode:
\begin{equation*}
  H_{TC} = \hbar \left[ \omega_c a_1^{\dagger} a_1
  + \frac{\omega_a}{N} \left(\sum_{n=1}^{N} \sigma^+_n \right) \left(\sum_{n=1}^{N} \sigma^-_n\right)
  + g a_1^{\dagger} \sum_{n=1}^{N} \sigma^-_n + g a_1 \sum_{n=1}^{N} \sigma^+_n \right]
\end{equation*}
where $N$ is the number of atoms, $\omega_a$ and $\omega_c$ are the atomic and cavity
resonance frequency, and $g$ is the atom-photon coupling strength.
\begin{equation*}
  \scalebox{1}{\tikzfig{hamiltonians/examples/TavisCummings}}\label{eq:travis-cummings}
\end{equation*}

\section*{Acknowledgements}
We are grateful to Amar Hadzihasanovic as well as the anonymous QPL reviewers for their detailed feedback and their numerous suggestions for improvement.
We would also like to thank Richie Yeung and Alexis Toumi for insightful discussions related to this paper.
RS is supported by the Clarendon Fund Scholarship.
LY is supported by the Basil Reeve Graduate Scholarship at Oriel College with the Clarendon Fund.

\bibliographystyle{eptcs}
\bibliography{preamble/references}

\appendix

\section*{Appendix}

\section{Detailing the ZXW calculus}

\subsection{Additional notations}\label{subsec:additional-notations}
\begin{itemize}
  \item A multiplier~\cite{bonchiInteractingHopfAlgebras2017, caretteSZXCalculusScalableGraphical2019, boothCompleteZXCalculiStabiliser2022} labelled by $m$ indicates the number of connections between green and pink nodes:
  \begin{gather}
      \tikzfig{definitions/multiplier}
      \qquad \qquad
      \tikzfig{definitions/multiplier-t}
      \qquad \qquad
      \tikzfig{lemmas/multipliers/multiplier-mod}
    \tag{Mu}\label{rule:Mu}\refstepcounter{equation}
  \end{gather}

  \item The multipliers interacting together with the Z, X and W nodes play an
  important role in the ZXW calculus.
  For brevity, we use the notations $V$ with interpretation:
  \begin{gather}
    \tikzfig{definitions/v-def}
    \quad \overset{\interp{\cdot}}{\longmapsto} \quad
    \ket{0}\bra{0} + \sum_{i=1}^{d \minu 1} \ket{i}\bra{-1}
    \tag{VB}\label{rule:VB}\refstepcounter{equation}
  \end{gather}
  Furthermore, we introduce the $M$ box that equals to the following diagrams:
  \[
    \tikzfig{definitions/m-box}
  \]

  \item The Hadamard box, denoted $\tikzfig{definitions/hadamard-node}$, may be
  constructed from Z, X and W generators, as in~\cite[Axiom HD]{poorCompletenessArbitraryFinite2023}.
  We define its inverse, the yellow $H^\dagger$ box as follows:
  \begin{gather}
    \tikzfig{definitions/h-dagger}
    \tag{H$\null^\dagger$}\label{rule:HDagger}\refstepcounter{equation}
  \end{gather}
\end{itemize}

\subsection{Rules of ZXW}\label{subsec:rules-zxw}

\begingroup 

\allowdisplaybreaks
\raggedcolumns

We give a set of rewrite rules that is shown to be complete
in~\cite{poorCompletenessArbitraryFinite2023}.
$\overrightarrow{a} = (a_1,\dotsc, a_{d-1})$ and
$\overrightarrow{b} = (b_1,\dotsc, b_{d-1})$ denote arbitrary complex vectors.

\subsubsection*{Qudit ZX-part of the rules}

\begin{gather}
  \tikzfig{axioms/gengspiderfusedit}
  \tag{S1}\label{rule:S1}\refstepcounter{equation}
\end{gather}
\vspace*{\jot}
\vspace*{-.4cm}
\begin{multicols}{2}
  \setlength{\abovedisplayskip}{0pt}
  \setlength{\belowdisplayskip}{0pt}
  \noindent
  \begin{gather}
    \tikzfig{axioms/s2qudit}
    \tag{S2}\label{rule:S2}\refstepcounter{equation} \\
    \tikzfig{axioms/dcomwtha0}
    \tag{D1}\label{rule:D1}\refstepcounter{equation} \\
    \tikzfig{axioms/p1sdit2}
    \tag{P1}\label{rule:P1}\refstepcounter{equation} \\
    \tikzfig{axioms/rdotaemptydit0}
    \tag{Ept}\label{rule:Ept}\refstepcounter{equation} \\
    \tikzfig{axioms/zerotoreddit0}
    \tag{Zer}\label{rule:Zer}\refstepcounter{equation} \\
    \tikzfig{axioms/b2qudit}
    \tag{B2}\label{rule:B2}\refstepcounter{equation} \\ \pagebreak
    \tikzfig{axioms/k1copy}
    \tag{K0}\label{rule:K0}\refstepcounter{equation} \\
    \tikzfig{axioms/pimultiplecpdit}
    \tag{K1}\label{rule:K1}\refstepcounter{equation} \\
    \tikzfig{axioms/k2adit}
    \tag{K2}\label{rule:K2}\refstepcounter{equation} \\
    \tikzfig{axioms/pinkspiders}
    \tag{HZ}\label{rule:HZ}\refstepcounter{equation}
  \end{gather}
\end{multicols}
\vspace{-0.75cm}
\begin{equation*}
  \text{where}\quad
  \protect\overleftarrow{a}=(a_{d-1}, \dotsc, a_1),\quad
  \protect\overrightarrow{ab}=(a_1 b_1,\dotsc, a_{d-1} b_{d-1}),
  \quad\text{and}\quad
  \protect{k_j(\overrightarrow{a})}=\left(\frac{a_{1-j}}{a_{d-j}}, \dotsc, \frac{a_{d-1-j}}{a_{d-j}}\right)
\end{equation*}
In the definition of $k_j$ we take the indices modulo $d$, that is, $a_j = a_{(j\ \mathrm{mod}\ d)}$.

\subsubsection*{Qudit ZW-part of the rules}
\begin{multicols}{2}
  \setlength{\abovedisplayskip}{0pt}
  \setlength{\belowdisplayskip}{0pt}
  \noindent
  \begin{gather}
    \tikzfig{axioms/w-bialgebra}
    \tag{BZW}\label{rule:BZW}\refstepcounter{equation} \\
    \tikzfig{axioms/phasecopydit}
    \tag{Pcy}\label{rule:Pcy}\refstepcounter{equation} \\
    \tikzfig{axioms/additiondit}
    \tag{AD}\label{rule:AD}\refstepcounter{equation} \\ \columnbreak
    \tikzfig{axioms/wsymetrydit}
    \tag{Sym}\label{rule:Sym}\refstepcounter{equation} \\
    \tikzfig{axioms/associatedit}
    \tag{Aso}\label{rule:Aso}\refstepcounter{equation} \\
    \tikzfig{axioms/w-w-algebra}
    \tag{WW}\label{rule:WW}\refstepcounter{equation}
  \end{gather}
\end{multicols}

\subsubsection*{Qudit ZXW-part of the rules}

\begin{multicols}{2}
  \setlength{\abovedisplayskip}{0pt}
  \setlength{\belowdisplayskip}{0pt}
  \noindent
  \begin{gather}
    \tikzfig{axioms/triangleocopydit}
    \tag{Bs0}\label{rule:Bs0}\refstepcounter{equation} \\
    \tikzfig{axioms/trianglepicopydit2}
    \tag{Bsj}\label{rule:Bsj}\refstepcounter{equation} \\
    \tikzfig{axioms/trialgebra}
    \tag{TA}\label{rule:TA}\refstepcounter{equation} \\
    \tikzfig{axioms/v-push-w}
    \tag{VW}\label{rule:VW}\refstepcounter{equation} \\
    \tikzfig{axioms/z-push-v}
    \tag{ZV}\label{rule:ZV}\refstepcounter{equation} \\\columnbreak
    \tikzfig{axioms/zbox-v-decomposition-2}
    \tag{VA}\label{rule:VA}\refstepcounter{equation} \\
    \tikzfig{axioms/k1zstate}
    \tag{KZ}\label{rule:KZ}\refstepcounter{equation}
  \end{gather}
\end{multicols}
\vspace{-0.75cm}
\begin{align*}
  \text{where}&\quad
  T_j=\overbrace{(\underbrace{0,\dotsc, 1}_{d-j}, \dotsc, 0)}^{d-1},\quad
  e_1, \dotsc, e_n \in \{ 1, \dotsc, d-1\},\quad
  \overrightarrow{a_{d-1}} = \left(a_{d-1}, a_{d-1}, \dotsc, a_{d-1}\right)
\end{align*}

\endgroup 

\setlength{\abovedisplayskip}{10pt}
\setlength{\belowdisplayskip}{10pt}
\allowdisplaybreaks
\setlength{\jot}{20pt}

\section{Proof of lemmas}

\begin{lemma}
  \label{s4lm}\cite{wangQufiniteZXcalculusUnified2022}
  In $\bf{ZXW}_d$, for any $d$, we have:
  \begin{gather}
    \tikzfig{zxw/redspider0pfusedit2}
    \tag{S4}\label{rule:S4}
  \end{gather}
\end{lemma}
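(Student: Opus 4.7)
The plan is to derive the red (X) spider fusion rule with zero phase by reducing it to the Z spider fusion rule \eqref{rule:S1} via the Hadamard correspondence \eqref{rule:HZ}. The rule \eqref{rule:HZ} states that a pink/red X spider can be expressed as a green Z spider with Hadamard boxes on each leg, so the idea is to translate both sides of \eqref{rule:S4} through \eqref{rule:HZ}, apply Z spider fusion, and cancel adjacent Hadamard pairs.

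More concretely, first I would rewrite each of the two X spiders on the left-hand side using \eqref{rule:HZ}, replacing each with a Z spider decorated by Hadamard boxes on all incident wires. The wires along which the two X spiders are connected will then carry a pair of Hadamard boxes back-to-back. Next, I would use the definition \eqref{rule:HDagger} of $H^\dagger$ together with the fact that $H$ is unitary to eliminate these Hadamard pairs on the internal connecting wires, leaving a single connected region of Z spiders with external Hadamard boxes. Then I would invoke Z spider fusion \eqref{rule:S1} to merge the now-connected Z spiders into one Z spider with the combined inputs and outputs, still dressed by Hadamards on the external wires. Finally, a single application of \eqref{rule:HZ} in the reverse direction collapses this Hadamard-dressed Z spider back into a single X spider, matching the right-hand side.

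The main obstacle will be bookkeeping: one must be careful that the X spiders being fused both have phase $0$, so that when they are expressed via \eqref{rule:HZ} no extra Z-box phases are introduced that would obstruct the merging step, and one must check that every external leg ends up with exactly one Hadamard so that the inverse application of \eqref{rule:HZ} applies cleanly. A secondary subtlety is that \eqref{rule:HZ} is stated for a particular arity of pink spider, so for the general arities appearing in \eqref{rule:S4} one may need either to iterate \eqref{rule:HZ} or to combine it with Z spider fusion first. Once this bookkeeping is set up correctly, however, the derivation is a short chain of rewrites entirely within the qudit ZXW-part of the axioms, independent of the dimension $d$, which is consistent with the statement that the lemma holds for all $d$.
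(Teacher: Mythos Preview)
Your approach via the Hadamard colour-change rule \eqref{rule:HZ} and Z spider fusion \eqref{rule:S1} is the standard route to X spider fusion and is essentially what one finds in the cited source; the paper itself does not spell out the argument but simply defers to \zxw{Lemma 13}. So the strategy is correct and almost certainly coincides with the referenced proof.

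One point to tighten: the step where you cancel the back-to-back Hadamard boxes on the connecting wires is not a primitive axiom in this presentation. The authors explicitly remark that they have removed the rule (H1) of \cite{poorCompletenessArbitraryFinite2023} from the rule set because it is now derivable. Thus ``$H$ is unitary'' (i.e.\ $H H^\dagger = \mathrm{id} = H^\dagger H$) is itself a lemma that must be established first from the remaining axioms, not something you can invoke directly. You should either cite the relevant derived lemma from \cite{poorCompletenessArbitraryFinite2023} or indicate how it follows, rather than treating it as given. Relatedly, in the qudit setting \eqref{rule:HZ} places $H$ on some legs and $H^\dagger$ on others (inputs versus outputs), so you should check that the pair appearing on each internal wire is genuinely of the form $H^\dagger H$ or $H H^\dagger$; your parenthetical about arity and iteration is the right instinct, but the orientation of the Hadamard boxes matters here and deserves a sentence.
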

\begin{proof}
  Same as \zxw{Lemma 13}.
\end{proof}

\begin{lemma}
  \label{xdualiserlm}
  In $\bf{ZXW}_d$, for any $d$, we have:
  \begin{align*}
    &\tikzfig{zxw/xdualiser-1}\\
    &\tikzfig{zxw/xdualiser-2}
  \end{align*}
\end{lemma}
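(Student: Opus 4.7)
The plan is to prove both equations by unfolding the dualiser via its defining rule (Du), then pushing the resulting negation across the X spider using the $\pi$-multiple-copy rule (K1) and spider fusion (S1). Recall that by (Du) the dualiser implements $\ket{i}\mapsto\ket{d-i}$, i.e.\ negation modulo $d$, while an X spider with parameter $j$ enforces the affine congruence $i_1+\cdots+i_m+j\equiv j_1+\cdots+j_n\pmod d$ on its labels. Composing a dualiser onto any leg therefore amounts to negating that leg's label in the congruence, and this can in turn be reabsorbed as (i) a sign change of the spider's phase and (ii) dualisers reappearing on the other legs, which is exactly the pattern such a lemma would assert.

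Concretely, I would first expand the yellow $D$ box via (Du) (equivalently, via the yellow-triangle presentation (YT) together with Z boxes) so that the dualiser appears as a fragment carrying a distinguished $\ket{-1}$ Z-phase feeding into the X leg. I then apply (K1) to copy that $\ket{-1}$ phase through the X spider onto its remaining legs, followed by (S1) to fuse the resulting neighbouring Z boxes. Each of the copies that lands on another leg can be reassembled back into a dualiser by running the (YT)/(Du) expansion in reverse, while (K2) absorbs any leftover Z-phase shift into the X spider's parameter, producing the negated phase on the right-hand side. For the second equation, which I expect to be a companion identity (either the same push-through on the opposite side of the spider, or the combined statement for several legs at once), I would either iterate the first identity or compose it with itself and simplify using (Ept) together with a second invocation of (Du).

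The main obstacle will be the bookkeeping of sign conventions modulo $d$: tracking whether the leftover Z box carries $\overrightarrow{a}$ or the reversal $\overleftarrow{a}$, and verifying that the phase update matches the map $k_j(\overrightarrow{a})$ appearing in (K1) and (K2). A secondary but necessary subtlety is that to trigger (K1) cleanly one may need to insert an auxiliary Z-phase via (S1) first and re-fuse after the copy; this is routine, but it has to be done in the right order so that the final diagram is literally the claimed right-hand side and not merely equal to it up to a further rewrite. As a sanity check throughout, I would cross-verify each rewriting step on basis vectors using the standard interpretation $\interp{\cdot}_d$, which, while not a substitute for the diagrammatic proof, immediately flags any miscounted sign.
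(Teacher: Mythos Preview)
The paper does not give its own proof of this lemma: it simply cites \cite[Lemma~25]{poorCompletenessArbitraryFinite2023}. So there is no derivation here to compare your argument against step by step; what matters is whether your sketch would actually go through in $\mathbf{ZXW}_d$.

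Your semantic picture is correct: the dualiser implements $\ket{i}\mapsto\ket{-i}$, the X spider enforces an additive congruence modulo $d$, and therefore a dualiser on one leg can be traded for dualisers on the remaining legs together with a sign flip on the spider's parameter. That is exactly the content of the lemma. Where your plan becomes shaky is in the concrete mechanics. First, you identify the expansion of the dualiser with ``the yellow-triangle presentation \eqref{rule:YT} together with Z boxes'', but \eqref{rule:YT} defines the triangle $I_d+\sum_i\ket{0}\bra{i}$, which is an entirely different map from the dualiser; \eqref{rule:Du} is not a variant of \eqref{rule:YT}. Second, you describe the unfolded dualiser as ``a distinguished $\ket{-1}$ Z-phase feeding into the X leg'', but a Z box is diagonal in the computational basis and cannot by itself realise $\ket{i}\mapsto\ket{-i}$; the dualiser is genuinely off-diagonal and in this calculus is built from the X/Hadamard side, not the Z side. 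Consequently, invoking \eqref{rule:K1} to ``copy a Z-phase through the X spider'' does not model what is happening: \eqref{rule:K1} governs how X-type classical points pass through, not arbitrary Z boxes, and the object you need to push is not a Z box at all.

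A workable route is closer to what the cited lemma does: use the actual definition of $D$ in \eqref{rule:Du} (which sits on the X/Hadamard side of the calculus), then appeal to X-spider fusion \eqref{rule:S4} and the colour-change/Hadamard rules \eqref{rule:HZ}, \eqref{rule:D1} to slide the dualiser around the X node. Your bookkeeping concerns about $\overrightarrow{a}$ versus $\overleftarrow{a}$ and the parameter sign are real and are handled by \eqref{rule:K2} and \eqref{rule:D1} in that argument, not by \eqref{rule:K1}. The high-level strategy you outline is fine; the specific rules you name would not fire on the diagrams you would actually have in front of you.
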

\begin{proof}
  Same as \zxw{Lemma 25}.
\end{proof}

\begin{lemma}
  \label{dboxgcopylm}\cite{wangQufiniteZXcalculusUnified2022}
  In $\bf{ZXW}_d$, for any $d$, we have:
  \[
    \tikzfig{zxw/dboxgcopy}
  \]
\end{lemma}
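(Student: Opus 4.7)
The plan is to prove the identity first at the semantic level and then lift it into the calculus by unfolding the $D$ box into its primitive generators. By definition \eqref{rule:Du} the dualiser is a permutation of the computational basis, sending $\ket{i}$ to $\ket{d-i}$ (indices mod $d$). Since the Z spider copies basis states, we have $\Delta_Z D\ket{i} = \ket{d-i}^{\otimes 2} = (D\otimes D)\Delta_Z\ket{i}$, so the identity is immediate in $\bf{Vect}_d$; what is needed is a diagrammatic rewrite that realises this inside $\bf{ZXW}_d$.

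First I would unfold $D$ using the yellow triangle construction \eqref{rule:YT} together with the expression of the dualiser in terms of W nodes and Z phases implicit in \eqref{rule:Du}. This exposes exactly the subdiagrams on which copy rules are already available. Then I would propagate the green spider through the resulting layers one at a time, using the basis-state copy rules \eqref{rule:Bs0} and \eqref{rule:Bsj} where X spiders appear, phase copy \eqref{rule:Pcy} to push the Z spider past W nodes, the Z-W bialgebra \eqref{rule:BZW} to duplicate the W structure, and spider fusion \eqref{rule:S1} to clean up. After the green spider has passed through the entire block, the remaining diagram reorganises, via the same definitional equations read backwards together with \eqref{rule:Mu}, into two side-by-side copies of $D$ on the outputs.

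The main obstacle is keeping track of the modular index arithmetic: the involution $i\mapsto d-i$ interacts nontrivially with the reindexing operation $k_j$ appearing in rules such as \eqref{rule:K1} and with the multipliers used in the definition of $D$, so one must verify at each step that the Z-phase vectors and multiplier labels transform as required by $\overleftarrow{a}$ conventions rather than by $\overrightarrow{a}$. Following the pattern of Lemma~\ref{s4lm} and Lemma~\ref{xdualiserlm}, the cleanest route is to inherit the full rewrite sequence from the corresponding identity already established in the qufinite ZX-calculus of~\cite{wangQufiniteZXcalculusUnified2022}; the $D$ box there is constructed from the same W and spider generators, so the proof transports into the ZXW setting without modification, avoiding the bookkeeping overhead of redoing every intermediate step.
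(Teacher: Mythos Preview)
Your proposal is essentially correct and, at bottom, takes the same approach as the paper: both defer to an already-published rewrite sequence rather than redoing the derivation. The paper's entire proof is the one line ``Same as \cite[Lemma~35]{poorCompletenessArbitraryFinite2023}'', i.e.\ it points to the ZXW completeness paper where the dualiser-copy identity is proved in the $\bf{ZXW}_d$ rule set directly; you instead point to the qufinite ZX paper~\cite{wangQufiniteZXcalculusUnified2022} (the source of the \emph{statement}) and argue the proof transports. That is a small mismatch worth noting: citing~\cite{poorCompletenessArbitraryFinite2023} is cleaner because its rule set is literally the one in force here, so no ``transports without modification'' claim is needed. Your preliminary semantic argument and rule-by-rule sketch are fine as motivation but are more than the paper asks for.
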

\begin{proof}
  Same as \zxw{Lemma 35}.
\end{proof}

\begin{lemma}
  \label{lem:square-popping}
  In $\bf{ZXW}_d$, for any $d$, we have:
  \[
    \tikzfig{lemmas/SquarePopping}
  \]
\end{lemma}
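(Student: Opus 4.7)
Without seeing the exact content of \texttt{lemmas/SquarePopping}, I infer from its position (immediately after the dualiser and D-box green-copy lemmas, and stated as a generic $\bf{ZXW}_d$ identity needed for downstream results like \cref{lem:razins}) that the statement asserts how a ``square'' box (a Z box with a squared parameter vector, or a Hadamard-conjugated square) can be propagated through a W node or a green spider. The plan is to reduce the identity to a composition of already-proved local rewrites, using the fact that everything on both sides of the equation lies in a truncation and hence in $\bf{ZXW}_d$ where completeness applies.

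First I would unfold any notational shorthand on the left-hand side into the primitive ZXW generators: expand any $H$/$H^\dagger$ boxes via \axref{HDagger}, rewrite the dualiser via \axref{Du}, and expand the ``square'' as a Z box with the appropriate entries using \axref{S2} to split it into composable Z-box layers. Once the diagram is in primitive form, I would push the Z boxes across the W node using \axref{Pcy} (phase copy through W), then fuse adjacent green spiders with \axref{S1} or its X-variant \lemref{s4lm}. Whenever a dualiser must cross an X spider I would use \lemref{xdualiserlm}, and whenever a D-box sits above a green spider I would use \lemref{dboxgcopylm} to copy it through.

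The central manipulation is likely to be a single ``trialgebra-style'' move: after unfolding, the two sides should differ only by whether the square factor is placed before or after a W copy. The identity then collapses to showing that doubling-then-copying equals copying-then-doubling, which follows from \axref{Pcy} applied to the squared-entry Z box plus a fusion by \axref{S1}. I would close the proof by folding the resulting primitive diagram back into the compact notation used on the right-hand side, reversing the definition of the square box.

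The main obstacle I anticipate is coefficient bookkeeping: because the Z box has entries that are squares of those appearing elsewhere in the diagram, and because \axref{Pcy} introduces componentwise products, I need to verify that the vector equation $k_j(\overrightarrow{a}^{\,2}) = (k_j(\overrightarrow{a}))^2$ (or whichever permutation arises) holds on the nose, with indices reduced modulo $d$. A secondary subtlety is that the dualiser reverses the index ordering via $\overleftarrow{a}$, so I must check that squaring and reversal commute, which is immediate but must be stated explicitly. Once these elementary identities on parameter vectors are in hand, the diagrammatic rewriting itself is a short chain.
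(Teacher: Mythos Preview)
Your proposal is not a proof but a guess at a proof strategy, and the guess rests on a misreading of the word ``square.'' You interpret it as a Z box whose parameter vector has been squared entrywise, and you build the whole plan around pushing such a squared phase through a W node via \eqref{rule:Pcy}, checking identities like $k_j(\overrightarrow{a}^{\,2}) = (k_j(\overrightarrow{a}))^2$. Nothing in the paper suggests that reading: no squared-phase Z box appears anywhere in the calculus, and the lemma is positioned right after the dualiser and $D$-box copy lemmas and is invoked in the proof of \cref{lem:razins}, where the relevant ``square'' objects are the projector box of~\eqref{eq:d-projector} and the boxed generators around it, not exponentiated phases. So the central manipulation you propose (phase-copy of a squared vector followed by spider fusion) is aimed at the wrong target, and the coefficient bookkeeping you flag as the main obstacle is simply irrelevant.

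The paper's proof is a short diagrammatic chain; the lemmas you would actually need are the ones immediately preceding it (\cref{s4lm}, \cref{xdualiserlm}, \cref{dboxgcopylm}) together with the projector/X-spider interaction, not \eqref{rule:HDagger}, \eqref{rule:Du} or a squared-parameter instance of \eqref{rule:Pcy}. Until you identify the correct statement --- what box is being ``popped'' and through what node --- the proposal cannot be repaired by adjusting details; the first step is to look at the diagram rather than infer it from the filename.
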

\begin{proof}
  \[
    \tikzfig{lemmas/SquarePoppingProof}
  \]
\end{proof}

\begin{lemma}
  In $\bf{ZW}_\infty$, we have:
  \label{lem:wunit}
  \[
    \tikzfig{lemmas/wunit}
  \]
\end{lemma}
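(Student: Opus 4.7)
The plan is to prove this lemma by invoking the lifting theorem (Theorem~\ref{th:lifting}), reducing the claim in $\bf{ZW}_\infty$ to a dimension-parametric family of equations in $\bf{ZXW}_d$. First I would note that every generator appearing in the statement (namely the W node and a Z spider state with trivial parameters) preserves the total photon number, or at most adds a bounded finite contribution. Hence, following the argument in the proof of Lemma~\ref{lemma:lifting}, for any $n \in \N$ we may choose $d^\ast = n$ (or $d^\ast = n + 1$ if a particle preparation is present), so that the truncated interpretations $\mathcal{T}_d$ of the two sides agree on the $n$-particle sector whenever $d > d^\ast$.

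Next I would prove the corresponding equation in the truncated calculus $\bf{ZXW}_d$. The standard qudit W-unit identity, stating that the counit Z spider (the all-zero state $\ket{0}$) is the unit for the W algebra, follows from the qudit ZW axioms. Concretely, I would combine the bialgebra rule~\eqref{rule:iBZW} with the symmetry~\eqref{rule:iSym} and associativity~\eqref{rule:iAso} of W, together with the $W$--$W$ interaction~\eqref{rule:iWW}, to collapse the W node plugged with a $\ket{0}$ on one output into the identity wire. Since this derivation uses only rules that are independent of the dimension $d$ and do not introduce any projector obstructions, the equation holds in $\bf{ZXW}_d$ for every $d$, not merely modulo projectors.

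Having established $\mathcal{T}_d(\text{LHS}) = \mathcal{T}_d(\text{RHS})$ for all $d$, we trivially obtain $\mathcal{T}_d(\text{LHS})\tilde{P}_n = \mathcal{T}_d(\text{RHS})\tilde{P}_n$ for every $n$ and every sufficiently large $d$, so the lifting theorem yields the desired equality in $\bf{ZW}_\infty$.

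I expect the main subtlety to be bookkeeping rather than conceptual: choosing the rewrite path in $\bf{ZXW}_d$ that avoids any appeal to compact closure or to axioms that genuinely depend on $d$ (such as those involving cups or dualisers), since these do not lift cleanly to the Fock space. Once the rewrite is confined to the W-algebra and Z-spider fragment, the argument is routine, and the lifting step is immediate.
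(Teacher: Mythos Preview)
Your proposal is correct and follows essentially the same approach as the paper: establish the W-unit identity in $\bf{ZXW}_d$ for all $d$ and then invoke \cref{th:lifting}. The paper's proof is in fact even terser---it simply cites the corresponding qudit derivation \zxw{Lemma 38} rather than sketching the rewrite path through the W-algebra axioms as you do, but the strategy is identical.
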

\begin{proof}
  The proof is obtained by lifting \zxw{Lemma 38}.
\end{proof}

\begin{lemma}
  \label{lem:state-phase}
  In $\bf{ZW}_\infty$ we have:
  \[
    \tikzfig{lemmas/state-phase}
  \]
\end{lemma}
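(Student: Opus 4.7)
My plan is to apply the \cref{th:lifting} to reduce this to a finite-dimensional calculation in $\bf{ZXW}_d$ for sufficiently large $d$. Based on its name and its position just after \cref{lem:wunit} (a lifting of a ZXW lemma), \cref{lem:state-phase} should express how a creation/annihilation state interacts with a Z-box phase. The natural expected content is that composing an $n$-photon state $\tikzfig{qpath/generators/state-1}$-like generator with a Z-box endomorphism $\tikzfig{qpath/generators/phase}$ produces a scaled copy of the state by the corresponding component of $\vec{r}$, since on the nose $Z(\vec{r})\ket{n} = r^n\ket{n}$.

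First, I would translate both sides under the truncation $\mathcal{T}_d$. By \cref{eq:ket-def}, photon creations map to unit X spiders, and by the construction in the paragraph following \cref{sec:qpath}'s phase definition, the QPath endomorphism becomes a Z box with parameters $r^{\vec{N}}$. Composing these, the proof in $\bf{ZXW}_d$ should proceed by rewriting the Z box on a unit X spider. For this, I would first push the Z box through the X spider using the bialgebra-type interaction encoded by \eqref{rule:Bsj} (which governs how a Z box with exponent entries copies through triangles/X spiders), together with \eqref{rule:S1} for Z spider fusion to absorb the resulting scalar Z box into the ambient diagram. Alternatively, since the state picks out a single basis vector, I can directly evaluate the Z box's action on $\ket{n}$ via the defining interpretation and read off the scalar factor, then re-express it as a scalar Z box on the output via the inverse of the Z box definition.

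Second, once the equation is established in $\bf{ZXW}_d$ for all $d$ larger than $n+1$ (where $n$ is the photon count appearing in the state), I precompose with a sufficiently large projector $\tilde{P}_N$ — which is trivial here since the state has a fixed finite photon number, so essentially no projector is needed beyond ensuring $d$ exceeds this number — and invoke \cref{th:lifting} to conclude the equation holds in $\bf{ZW}_\infty$. The bookkeeping for $d^\ast$ is straightforward: with $n$ fixed by the state, any $d > n + 1$ works.

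The main obstacle I anticipate is not the finite-dimensional rewrite itself but rather matching the exact intended LHS/RHS — in particular, whether \cref{lem:state-phase} concerns a Z-box \emph{phase} (i.e.\ the QPath endomorphism $Z(\vec{r})$) acting on an explicit $n$-photon state (yielding an $r^n$ scalar), or whether it is a more general statement about a Z-box state absorbing a Z-box phase (which follows straight from \eqref{rule:iS1} by Z spider fusion, once the state's inputs are matched with the phase's outputs). If the former, the proof reduces to the calculation above; if the latter, it is essentially a direct application of \eqref{rule:iS1} after observing that both the state and the phase are Z boxes with finite-support parameters, and their fusion is well-defined because the number of inputs of the fused spider remains positive.
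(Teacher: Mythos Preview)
Your approach is essentially the same as the paper's: the paper first establishes the equation in $\bf{ZXW}_d$ via a short diagrammatic derivation and then concludes by lifting, exactly as you propose. Your hedging about the precise content of the figure is understandable, but your primary plan (truncate via $\mathcal{T}_d$, rewrite in $\bf{ZXW}_d$, then invoke \cref{th:lifting}) matches the paper's proof on the nose; the alternative route via \eqref{rule:iS1} alone is not what the paper does.
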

\begin{proof}
  Firstly, we show that in $\bf{ZXW}_d$ we have:
  \[
    \tikzfig{lemmas/state-phase-proof}
  \]
  Then, the proof is acquired by lifting.
\end{proof}

\begin{lemma}
  \label{lem:split-transpose}
  In $\bf{ZW}_\infty$,
  \[
    \tikzfig{lemmas/MergeTranspose}
    \qquad\qquad
    \tikzfig{lemmas/WTranspose}
  \]
\end{lemma}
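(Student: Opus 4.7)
The plan is to prove both transpose identities by working in the truncated interpretation $\bf{ZXW}_d$ and then invoking the lifting theorem (\cref{th:lifting}). Since $\bf{ZW}_\infty$ does not admit compact structure, the transposes appearing in the statement must be read as shorthand for diagrams involving caps/cups together with dualisers, which are well-defined in $\bf{ZXW}_d$ for each $d$; by the Remark preceding this lemma, such auxiliary generators may be freely introduced during rewriting provided they cancel out of the endpoints. This is precisely the situation here, since the left- and right-hand sides of both equations are themselves valid $\bf{ZW}_\infty$ diagrams.

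First I would treat the merge-transpose identity. Under $\mathcal{T}_d$, the merge map is an X spider dressed with a projector and two Z boxes carrying the $\binom{n+m}{n}^{1/2}$ coefficients, while the split map is its mirror image. Composing the truncation of the merge with caps and dualisers on both legs, I would use \axref{xdualiserlm} (\cref{xdualiserlm}) to push the dualiser through the X spider, producing exactly a flipped X spider with the projector and Z boxes now positioned as in $\mathcal{T}_d$ of the split map. The Z-box coefficients transpose trivially, and the projector is unchanged under the action of the dualiser on its legs (by \eqref{rule:PZ} applied to the Z-box decomposition of the dualiser). After pre-composing with a projector $\tilde P_n$ and verifying the identity holds in $\bf{ZXW}_d$ for every $d$, the lifting theorem yields the desired equality in $\bf{ZW}_\infty$.

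Next I would handle the W-transpose identity by the same strategy. Here the starting point is the W node's truncation (which is just the qudit W node of $\bf{ZXW}_d$); the transpose with caps, cups, and dualisers needs to be shown equal to the W node itself (possibly up to a dualiser on the stem, depending on the precise statement of \tikzfig{lemmas/WTranspose}). The essential rewrite is \eqref{rule:VW}, which governs how dualisers interact with W nodes, combined with the symmetry axiom \eqref{rule:iSym} and associativity \eqref{rule:iAso} to repackage the legs. As before, the manipulations occur inside the truncation, where dualisers are legal, and the resulting equation is independent of $d$, so \cref{th:lifting} applies.

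The main obstacle I anticipate is bookkeeping the projectors. The merge-transpose equation involves a two-input generator while the split-transpose equation involves a one-input generator, and pushing a projector $\tilde P_n$ through caps and dualisers requires care: \eqref{rule:PP} and the recursive definition of $\tilde P_n$ must be combined to show that the projector on the inputs of the transposed side corresponds to the correct projector on the inputs of the original split/W diagram. Once one establishes the key commutation diagram between dualisers and projectors (analogous to \eqref{rule:PZ}), both identities follow with only a handful of ZXW rewrites, and the lifting step is then immediate.
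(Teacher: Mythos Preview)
Your overall strategy matches the paper's: truncate to $\bf{ZXW}_d$, verify there, then lift via \cref{th:lifting}. However, you significantly overcomplicate the $\bf{ZXW}_d$ step. The paper's entire proof reads: ``The statement follows from the lifting theorem as it is a trivial equation in $\bf{ZXW}_d$.'' That is, once truncated, both identities hold immediately from the compact-closed structure of $\bf{ZXW}_d$ (spider fusion and the definition of caps/cups via \eqref{rule:S3}), with no need to introduce dualisers, invoke \eqref{rule:VW} or \cref{xdualiserlm}, or track projectors through caps. Your anticipated ``main obstacle'' of projector bookkeeping simply does not arise: the truncations of split, merge, and W are just their qudit versions, and bending wires in $\bf{ZXW}_d$ is a one-step operation.

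Your reading of the transposes as requiring dualisers appears to be a misinterpretation of the figures. The equations are evidently of the sort that hold on the nose in any compact-closed PROP once one unfolds the definitions, which is why the paper does not spell out any rewrites. Your more elaborate route would likely still succeed, but it introduces auxiliary generators and lemmas that the actual statement does not call for.
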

\begin{proof}
  The statement follows from the lifting theorem as it is a trivial equation in $\bf{ZXW}_d$.
\end{proof}

\begin{lemma}
  \label{lem:razins}
  In $\bf{ZW}_\infty$ we have:
  \[
    \tikzfig{lemmas/RazinsLemma}
  \]
\end{lemma}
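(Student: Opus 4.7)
The plan is to prove this lemma by invoking the lifting theorem (\cref{th:lifting}). Since the stated equation lives in $\bf{ZW}_\infty$ but its most natural proof requires dualisers and Hadamard nodes (which are not valid morphisms on the Fock space), I would work entirely inside $\bf{ZXW}_d$ on the truncated images, apply the needed rewrites there using the auxiliary tools, and rely on the fact that these tools cancel out in the final equation so that the result lifts cleanly.

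More concretely, the first step is to truncate both sides under $\mathcal{T}_d$, replacing every split/merge by the projected X-spider form and every particle state by a unit X-spider, as prescribed by equations \eqref{eq:split_def}--\eqref{eq:ket-def}. Since the statement is used to prove bosonic and fermionic commutation relations, the underlying diagram almost certainly contains a W node adjacent to a split or Z spider, so I would expect the rewriting strategy to consist of: (i) introducing a pair of dualisers via \eqref{rule:Du} so that one of them can be absorbed into the W node or the Z-spider phase, (ii) using the Hadamard/dualiser interaction lemmas \cref{xdualiserlm} and \cref{dboxgcopylm} to push the dualiser through Z spiders and across the W node, and (iii) collapsing the resulting configuration via \cref{lem:square-popping} and the state-phase identity \cref{lem:state-phase}, together with the structural axioms \eqref{rule:BZW}, \eqref{rule:Sym}, \eqref{rule:Aso} and the triangle definition \eqref{rule:YT}.

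After these manipulations, the two dualisers that were introduced should meet and cancel (using the elementary fact that the dualiser is self-inverse on the $d$-particle sector), leaving a $\bf{ZXW}_d$ equation in which no dualiser or Hadamard node appears. I would then insert the $n$-particle projector $\tilde{P}_n$ from \cref{eq:proj-n-def} on the inputs; because the remaining generators (W nodes, Z spiders of the relevant type, and truncated split/merge) either preserve particle number or only locally modify it, the projector-commutation rules \eqref{rule:PW}, \eqref{rule:PZ}, \eqref{rule:PS}, \eqref{rule:PP} let the rewriting go through uniformly in $d$.

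The main obstacle I anticipate is bookkeeping: ensuring that every intermediate diagram in $\bf{ZXW}_d$ really is equal to the truncation of the corresponding infinite diagram when composed with $\tilde{P}_n$, so that the hypothesis of \cref{th:lifting} is genuinely met. In particular, the dualiser/Hadamard steps must be arranged so that \emph{both} sides, when truncated, admit the \emph{same} auxiliary rewriting sequence up to the projected bialgebra slack \eqref{rule:bBA}; if the cancellation of dualisers only works modulo the $d$-particle projector, I would need to choose $d^\ast$ large enough (as a linear function of the input particle budget $n$) so that the projector is transparent to the finite part of the diagram. Once that choice is made, \cref{th:lifting} delivers the equation in $\bf{ZW}_\infty$ without ever needing dualisers to be interpreted on the Fock space.
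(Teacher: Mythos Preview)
Your overall strategy---truncate, use dualisers inside $\bf{ZXW}_d$ where they are legitimate, let them cancel, then invoke \cref{th:lifting}---is exactly what the paper does for the harder part of this lemma. Two points where your plan diverges from the paper's actual proof are worth flagging.

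First, the lemma consists of \emph{two} equations. The paper dispatches the first one entirely inside $\bf{ZW}_\infty$ using only Axioms~\eqref{rule:iK0} and~\eqref{rule:bW1}; no truncation, no dualisers, no lifting. Your plan to route everything through $\bf{ZXW}_d$ would presumably still work for that equation, but it is unnecessary overhead, and your speculative list of lemmas (\cref{xdualiserlm}, \cref{dboxgcopylm}, \cref{lem:square-popping}, \cref{lem:state-phase}) is not what is actually used there.

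Second, for the second equation the paper does not obtain the $\bf{ZW}_\infty$ identity purely as the lift of a single $\bf{ZXW}_d$ derivation. It first proves an auxiliary equality in $\bf{ZXW}_d$ (this is where the dualiser machinery appears and cancels), lifts \emph{that} intermediate equality, and then finishes with a short chain of rewrites \emph{in $\bf{ZW}_\infty$} using the native axioms. Your proposal implicitly assumes that the target equation itself is what gets lifted; in the paper the lifted equation is a stepping stone, not the final statement. This does not make your plan wrong, but it does mean your bookkeeping worry about matching both sides under $\tilde{P}_n$ is aimed at the wrong target: you only need the projector-compatibility for the intermediate lemma, after which ordinary $\bf{ZW}_\infty$ rewriting takes over.
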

\begin{proof}
  The first equation follows from Axioms~\eqref{rule:iK0} and~\eqref{rule:bW1}.
  For the second equation, we start by truncating and reasoning in $\bf{ZXW}_d$:
  \begin{align*}
    \tikzfig{lemmas/RazinsLemmaProof0} \quad
    &\tikzfig{lemmas/RazinsLemmaProof1} \\
    &\tikzfig{lemmas/RazinsLemmaProof2} \\
    &\tikzfig{lemmas/RazinsLemmaProof4} \\
    &\tikzfig{lemmas/RazinsLemmaProof5}
  \end{align*}

  We lift the derivation above using \cref{th:lifting} and end by reasoning in $\bf{ZW}_\infty$:
  \begin{align*}
    \tikzfig{lemmas/RazinsLemmaProof0-2} \quad
    &\tikzfig{lemmas/RazinsLemmaProof6} \\
     &\tikzfig{lemmas/RazinsLemmaProof8}
  \end{align*}
\end{proof}


\begin{restatable}{lemma}{phaseHamiltonian}
  \label{lem:phaseHamiltonian}
  \begin{equation*}
    \scalebox{1}{\tikzfig{hamiltonians/NumberOperator}}
    \qquad \xmapsto{T_d} \qquad
    \scalebox{1}{\tikzfig{hamiltonians/NumberOperatorRHS}}
  \end{equation*}
\end{restatable}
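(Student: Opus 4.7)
The plan is to unfold the truncation $\mathcal{T}_d$ of the number operator generator by generator. Recall that the number operator on the left-hand side is built, via the controlled-product construction (\cref{ProductControlledDiagram}), from the controlled annihilation and controlled creation operators defined at the start of \cref{sec:hamiltonians}. Under $\mathcal{T}_d$, the split and merge maps unfold into an X spider flanked by Z boxes carrying the binomial coefficients together with a projector onto the $d$-particle sector, while the W nodes and Z spiders translate directly into their qudit counterparts.

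First I would expand the left-hand side into the ZXW diagram obtained by composing the truncations of each generator. This yields a central block with a split followed by a merge, where the "extracted-particle" wire carries the unit-X creation state and annihilation effect of \cref{eq:ket-def}. Using the interaction rule \eqref{rule:iBsj} together with the action of a $\ket{1}$-state on a split, that wire contracts and produces the scalar $\binom{n}{1}^{1/2}\binom{n}{1}^{1/2} = n$ on the basis state $\ket{n}$ passing through. This is precisely the eigenvalue of $\hat n$, so the composite acts diagonally as $\sum_n n\,\ket{n}\bra{n}$ on the $d$-particle sector.

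Next I would repackage this diagonal action as a Z box. Since a Z box with phase vector $\overrightarrow a$ multiplies $\ket{j}$ by $a_j$ with the normalisation $a_0 = 1$, encoding $\hat n$ requires the vector $\overrightarrow N = (1,2,\ldots,d-1)$ together with a W node on the output to override the default $a_0 = 1$ and give the correct $0$ eigenvalue on the $\ket{0}$ sector; this is exactly the QPath-endomorphism pattern recorded just after \cref{eq:inf-z-state}. I expect this to coincide with the predicted shape of the right-hand side, modulo planar isotopy and the definition \eqref{rule:YT} of the yellow triangle if it appears.

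The main obstacle will be correctly tracking the projectors that must be threaded through the diagram whenever the truncated bialgebra rule \eqref{rule:bBA} is invoked, since these are the one place where $\mathcal{T}_d$ of a rewrite differs from the untruncated rewrite. Luckily the number operator preserves total photon count, so by the projector commutation rules \eqref{rule:PZ}, \eqref{rule:PW}, \eqref{rule:PS}, and \eqref{rule:PK} every projector can be pushed freely through the W nodes, Z spiders, and split/merge structures without obstruction. The final bookkeeping step is then just to absorb the remaining projector into the finite-support Z box on the right-hand side, which is legitimate because that Z box already vanishes outside the $(d-1)$-particle sector.
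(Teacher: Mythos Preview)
Your plan is correct and mirrors the paper's approach: unfold $\mathcal{T}_d$ on the controlled-product diagram for $a^\dagger a$ and simplify inside $\mathbf{ZXW}_d$ until the diagonal Z-box form emerges, handling the truncation projectors via the commutation rules you name. The paper executes this as an explicit eight-step chain of $\mathbf{ZXW}_d$ rewrites rather than the higher-level semantic shortcut you sketch (reading off the eigenvalue $n=\binom{n}{1}^{1/2}\binom{n}{1}^{1/2}$), but the route and the target shape are the same.
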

\begin{proof}
    \begin{align*}
      \tikzfig{hamiltonians/NumberOperatorProof-0} \quad
      &\tikzfig{hamiltonians/NumberOperatorProof-1} \\
      &\tikzfig{hamiltonians/NumberOperatorProof-2} \\
      &\tikzfig{hamiltonians/NumberOperatorProof-3} \\
      &\tikzfig{hamiltonians/NumberOperatorProof-4} \\
      &\tikzfig{hamiltonians/NumberOperatorProof-5} \\
      &\tikzfig{hamiltonians/NumberOperatorProof-6} \\
      &\tikzfig{hamiltonians/NumberOperatorProof-7}
    \end{align*}
\end{proof}

\begin{lemma}
  \label{lem:phaseShiftExponential}
  In $\bf{ZXW}_d$,
  \[
    \exp\left(\tikzfig{hamiltonians/NumberOperatorRHS}\right)
    \qquad = \qquad
    \tikzfig{hamiltonians/PhaseShiftExponential}
  \]
\end{lemma}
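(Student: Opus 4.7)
The plan is to reduce the claim to a semantic equality on the diagonal. The right-hand side, a green circle spider carrying phase $\alpha$ applied to the weight vector $\vec{N} = (1, 2, \ldots, d-1)$, unfolds via the Z box definition to a Z box with phase vector $e^{i\alpha \vec{N}} = (e^{i\alpha}, e^{2i\alpha}, \ldots, e^{(d-1)i\alpha})$, whose interpretation in $\bf{Vect}_d$ is the diagonal operator $\sum_{j=0}^{d-1} e^{i\alpha j}\ket{j}\bra{j}$. The diagram inside $\exp$ on the left-hand side is, by \cref{lem:phaseHamiltonian}, the truncation of $i\alpha \hat{n} = i\alpha a^\dagger a$, which interprets as the diagonal operator $\sum_{j=0}^{d-1} i\alpha j\,\ket{j}\bra{j}$. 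Since both sides are simultaneously diagonalisable in the computational basis, exponentiation acts entry-by-entry.

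The first step is therefore to compute $\interp{\mathcal{T}_d(i\alpha \hat{n})}_d$ and obtain $\mathrm{diag}(0, i\alpha, 2i\alpha, \ldots, (d-1)i\alpha)$. The second step is to apply the matrix exponential, which on a diagonal matrix simply exponentiates each eigenvalue, producing $\mathrm{diag}(1, e^{i\alpha}, e^{2i\alpha}, \ldots, e^{(d-1)i\alpha})$. The third step is to recognise this as the interpretation of the right-hand side diagram. The final step invokes completeness of $\bf{ZXW}_d$~\cite{poorCompletenessArbitraryFinite2023} to conclude that the two diagrams are equal in the calculus.

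A purely syntactic variant would expand $\exp$ as its Taylor series and use spider fusion~\eqref{rule:S1} to observe that the $k$-th power of a Z box with phase vector $\vec{a}$ is a Z box with phase vector $\vec{a}^{\odot k}$ (entrywise power), then sum the resulting coordinatewise series to obtain the Z box with phase vector $e^{i\alpha \vec{N}}$. I expect the main obstacle to be purely bookkeeping: the Z box convention requires $a_0 = 1$, while the number operator has zero as its $(0,0)$ entry. This mismatch is actually benign because $e^{i\alpha\cdot 0} = 1$ matches the $a_0 = 1$ convention after exponentiation, but it does mean that the intermediate series manipulation has to be performed at the semantic level rather than strictly inside the diagrammatic fragment generated by Z boxes alone. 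For this reason the semantic route via completeness is the cleanest, and is the approach I would take.
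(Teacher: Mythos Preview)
Your proposal is correct. The paper's own proof is a brief two-step diagrammatic derivation, whereas you argue semantically and invoke completeness of $\bf{ZXW}_d$ at the end. Both routes rest on the same observation --- the operator inside $\exp$ is diagonal in the computational basis, so exponentiation acts entrywise --- and the paper's derivation is essentially the diagrammatic counterpart of your semantic computation (recognise the diagonal form, then write the exponential as the green circle spider, which by definition is the Z box with phase vector $e^{i\alpha\vec{N}}$). Your syntactic variant via Taylor series and spider fusion~\eqref{rule:S1} is closer in style to what the paper does; your diagnosis of the $a_0 = 1$ obstacle is apt and explains why the pre-exponential diagram cannot be a bare Z box, but as you note the obstacle dissolves after exponentiation. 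The semantic-plus-completeness route you chose is perfectly valid and arguably cleaner, at the cost of not exhibiting the rewrite within the calculus.
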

\begin{proof}
  \begin{align*}
    &\tikzfig{hamiltonians/PhaseShiftExponentialProof-1} \\
    &\tikzfig{hamiltonians/PhaseShiftExponentialProof-2}
  \end{align*}
\end{proof}

\bosonCommutation*
\begin{proof}
  First, by rewriting in $\bf{ZW}_\infty$, and then using \cref{lem:razins} lifted from $\bf{ZXW}_d$.
  \begin{align*}
    \tikzfig{hamiltonians/BosonCommutationRelationProof-0} \quad
    & \tikzfig{hamiltonians/BosonCommutationRelationProof-1} \\
    & \tikzfig{hamiltonians/BosonCommutationRelationProof-2}
  \end{align*}
\end{proof}

\phaseShiftExponential*
\begin{proof}
  First, we diagonalise the truncation of the Hamiltonian $H_{P}$ using ZXW rewrites:
  \begin{equation*}
    \scalebox{1}{\tikzfig{hamiltonians/NumberOperator}}
    \qquad \xmapsto{\mathcal{T}_d} \qquad
    \scalebox{1}{\tikzfig{hamiltonians/NumberOperatorRHS}}
  \end{equation*}
  (See \cref{lem:phaseHamiltonian})
  Then, we derive the exponential in ZXW:
  \[
    \exp\left(\tikzfig{hamiltonians/NumberOperatorRHS}\right)
    \qquad = \qquad
    \tikzfig{hamiltonians/PhaseShiftExponential}
  \]
  (See \cref{lem:phaseShiftExponential})
  Lastly, the result is obtained by lifting.
\end{proof}

\kerrExample*
\begin{proof}
  In $\bf{ZXW}_d$
  \begin{align*}
    &\tikzfig{hamiltonians/examples/KerrProof-1}\\
    &\tikzfig{hamiltonians/examples/KerrProof-2}
  \end{align*}
  Then, one may exponentiate the diagram in the same way as \cref{lem:phaseShiftExponential}.
\end{proof}

\crossKerrExample*
\begin{proof}
  First, we demonstrate the following relation in $\bf{ZXW}_d$:
  \[
    \tikzfig{hamiltonians/CrossKerrProof-1}
  \]
  Based on this form, it becomes evident that the diagram corresponds to the operator $i \tau \hat{n}_1 \hat{n}_2$, which needs to be exponentiated.
  We employ the technique described by van de Wetering and Yeh~\cite{vandeweteringPhaseGadgetCompilation2022}
  for constructing diagonal qudit gates using \emph{phase gadgets}~\cite{cowtanPhaseGadgetSynthesis2019,kissingerReducingNumberNonClifford2020}.
  Firstly, note that the polynomial equation $(x + y)^2 = x^2 + 2xy + y^2$ implies that
  $e^{i \tau x y} = e^{i \frac{\tau}{2} \left( (x + y)^2 - x^2 - y^2 \right)} =
  e^{i \frac{\tau}{2} (x + y)^2} e^{-i \frac{\tau}{2} x^2} e^{-i \frac{\tau}{2} y^2}$.
  The same equation holds when considering $x$ and $y$ as operators $\hat{n}_1$ and $\hat{n}_2$, respectively, given their commutation.
  Utilizing these calculations, we can express the truncated cross-Kerr gate as the composition of the number operator on each wire and the diagram corresponding to $e^{i \frac{\tau}{2} (x + y)^2}$.
  The former is given by \cref{prop:phaseshiftexp}, while the latter is represented by the diagram in $\bf{ZXW}_d$:
  \[
    \tikzfig{hamiltonians/CrossKerrPart}
  \]
  By composing the diagrams and sliding the projector through the green phases according to \eqref{rule:PZ}, we obtain the cross-Kerr operator in $\bf{ZXW}_d$:
  \[
    \tikzfig{definitions/cross-kerr}
  \]
  Lifting this diagram to $\bf{ZW}_\infty$ completes the proof of the proposition.
\end{proof}

\crossKerrComposition*
\begin{proof}
  In $\bf{ZXW}_d$, the following holds:
  \begin{align*}
    &\tikzfig{hamiltonians/examples/CrossKerrCompositionProof-1} \\
    &\tikzfig{hamiltonians/examples/CrossKerrCompositionProof-2} \\
    &\tikzfig{hamiltonians/examples/CrossKerrCompositionProof-3} \\
    &\tikzfig{hamiltonians/examples/CrossKerrCompositionProof-4}
  \end{align*}
\end{proof}

\fermionCommutation*
\begin{proof}
    We reason entirely in $\bf{ZW}_\infty$:
  \begin{align*}
    &\tikzfig{hamiltonians/FermionCommutationRelationProof-1} \\
    &\tikzfig{hamiltonians/FermionCommutationRelationProof-2} \\
    &\tikzfig{hamiltonians/FermionCommutationRelationProof-3} \\
    &\tikzfig{hamiltonians/FermionCommutationRelationProof-4}
  \end{align*}
\end{proof}

\end{document}